\DeclarePairedDelimiter{\ceil}{\lceil}{\rceil}
\def\T{{\footnotesize {^{_{\sf T}}}}} 
\newcommand{\Real}{{\rm I}\negthinspace {\rm R}}
\newcommand{\dd}{\text{d}}
\newtheorem{theorem}{Theorem}[section]
\newcommand{\qedwhite}{\hfill \ensuremath{\Box}}
\newenvironment{proof}[1][Proof]{\begin{trivlist}
\item[\hskip \labelsep {\bfseries #1}]}{\end{trivlist}}
\newenvironment{scheme}[1][htb]
  {
   \begin{algorithm}[#1]%
  }{\end{algorithm}}
\begin{document}

\title{Improved Laplace Approximation\\ for Marginal Likelihoods}

\author{Erlis Ruli$^*$, Nicola Sartori and  Laura Ventura\\
{\it {\small Department of Statistics, University of Padova, Italy}} \\
{\tt {\small $^*$ruli@stat.unipd.it, sartori@stat.unipd.it, ventura@stat.unipd.it}} }
\maketitle

\begin{abstract}
Statistical applications often involve the calculation  of intractable multidimensional integrals. The Laplace formula is widely used to approximate such integrals. However, in high-dimensional or small sample size problems, the shape of the integrand function may be far from that of the Gaussian density, and thus the standard Laplace approximation can be inaccurate. We propose an improved Laplace approximation that reduces the asymptotic error of the standard Laplace formula by one order of magnitude, thus leading to third-order accuracy. We also show, by means of practical examples of various complexity, that the proposed method is extremely accurate, even in high dimensions, improving over the standard Laplace formula. Such examples also demonstrate that the accuracy of the proposed method is comparable with that of other existing methods, which are computationally more demanding. An \texttt{R} implementation of the improved Laplace approximation is \st{also} provided through the \texttt{R} package \texttt{iLaplace} available on \texttt{CRAN}. 
\end{abstract}


\noindent {\em Keywords:} Asymptotic expansions for integrals; Bayes Factor; Conditional minimisation; Integrated likelihood; Normalising constant; Numerical integration.

\section{Background}
\label{sec:1}
Statistical applications often involve the evaluation of finite integrals of the form
\begin{equation}
I_n = \int_{\Real^d}^{}e^{-h_n(x)}\,\text{d}x\,\label{eq:target},
\end{equation}
where $h_n(x)$ is a smooth and concave real function, with $x$ a $d$-dimensional real vector, indexed by $n>0$. For instance, in Bayesian analyses, $-h_n(\cdot)$ may be the log-likelihood or the log-posterior kernel and \eqref{eq:target} is the Bayesian marginal likelihood or the posterior normalising constant. Furthermore, in Generalized Linear Mixed Models (GLMM) $-h_n(\cdot)$ may represent the log-likelihood plus the log-density of the random effects.  In this case, \eqref{eq:target} gives the marginal likelihood for the parameters $(\theta, \theta_u)$, which can be generally written as
\begin{eqnarray}\notag
L(\theta, \theta_u;y) &=& \int L(\theta;u,y)f(u;\theta_u)\,\dd u\\\notag
 &=& \int  \exp\{\log L(\theta;u,y) + \log f(u;\theta_u)\}\,\dd u\\
 & = & \int e^{-h_n(u;\theta,\theta_u, y)}\,\dd u\,,
 \label{eq:mlGLMM}
\end{eqnarray}
where $f(u;\theta_u)$ is the density of the random effects indexed by the parameter $\theta_u$, and $L(\theta;u,y)$ is the likelihood for $\theta$ based on the conditional density of $y$ given $u$. The quantity $n$ is related to the information in the sample, and is often the sample size.

Integral \eqref{eq:target} is frequently intractable but it can be approximated by several methods \citep[see, e.g.,][]{evans2000}. 
Here, we focus on the Laplace approximation (see, e.g., \citealp[][]{bleistein1986}, Chap. 8 and \citealp[][]{small}, Chap. 6). Let $\hat x=(\hat{x}_1,\ldots,\hat{x}_d)$ be the unique minimum of $h(\cdot)$, where to ease notation hereafter we drop $n$ from $I_n$, $h_n(\cdot)$ and related quantities. In addition, we assume that the Hessian matrix of $h(\cdot)$ at $\hat{x}$,  i.e.
\[
\hfill\hat V  =  V(\hat{x}) = \frac{\partial^2 h(x)}{\partial x\partial x^\T}\bigg\vert_{x = \hat{x}}\,,\hfill
\]
is positive definite. The Laplace approximation of \eqref{eq:target} is second-order accurate, i.e., $I = \hat{I}^{\,\texttt{L}}\{1+O(n^{-1})\}$, with 
\begin{equation}
\hat{I}^{\,\texttt{L}} = (2\pi)^{d/2}|\hat{V}|^{-1/2}H(\hat{x}),
\label{eq:lap}
\end{equation}
where $H(\cdot) = \exp\{-h(\cdot)\}$; see, e.g., \citet[][p. 335]{bleistein1986}.


The Laplace approximation is widely used both in the Bayesian framework for approximating posterior densities and posterior moments \citep[see, e.g.,][]{tk1986,rue2009approximate} or Bayes Factors \citep[see, e.g.,][]{kass1995bayes}, and in the frequentist framework for integrating out random effects in GLMM \citep[see, e.g.,][]{breslow1993} or to compute marginal likelihoods in group models (\citealp[see, e.g., ][Sect. 2.8]{cox1994inference}; \citealp{Pace20063539}). In addition, it has also been used to approximate hypergeometric functions of matrix arguments \citep{butler2002laplace}. 
Moreover, \cite{ruli2014marginal} propose a simulation algorithm which draws posterior samples by inverting the approximate cumulative distribution function based on the Laplace approximation for marginal posterior densities. Lastly, \cite{martino2011} and \cite{rizopoulos2009fully} apply the Laplace method in the context of survival analysis and joint modelling of survival and longitudinal data, respectively. 

In the standard asymptotic setting with $d$ fixed and $n\to\infty$, the Laplace approximation \eqref{eq:lap} is second-order accurate. On the other hand, if also $d$ is large, the asymptotic expansion requires more terms in order to achieve the same accuracy as in lower-dimensions. For instance, when \eqref{eq:target} factorises as a product of $d$ scalar identical integrals, the relative error is of order $O(d/n)$ \citep[Sect.\ 6.9]{small}. However, in practice both $d$ and $n$ are fixed, and when $d$ is large relatively to $n$ it may be necessary to improve the accuracy of the standard Laplace method. Moreover, an unappealing feature of the Laplace approximation is that it does not account for skewness or kurtosis in the integrand function. Therefore, when the shape of the integrand is far from that of the Gaussian density, which can happen especially in high-dimensional or in small sample size problems, the standard Laplace approximation can be severely inaccurate. Example~\ref{sec:multT} in Sect. 3 shows an example in which the Laplace approximation fails dramatically, and with inaccuracy that deteriorates with increasing dimensionality. 

A possible way to improve the Laplace approximation is through the inclusion of higher-order derivatives of $h(\cdot)$ in the Taylor expansions. \cite{lindley1980} uses this idea in a Bayesian context. \cite{raudenbush2000maximum} propose a higher-order Laplace approximation for GLMM, by considering derivatives of $h(\cdot)$ up to the the sixth order. \cite{Pace20063539} use a similar approach for approximating marginal likelihoods in group models. However, when $d>1$, the computation of higher-order derivatives can be tedious.  Similar strategies are pursued by the Bayesian Bartlett correction proposed by \cite{diciccio1997computing}, and by the corrected Laplace approximation of  \cite{shun1995laplace}. However, the former involves posterior expectations, which in practice must be approximated through Monte Carlo methods and the latter solution is designed for situations, such as models with crossed random effects, in which the standard Laplace approximation may not be asymptotically valid. Another improvement of the standard Laplace approximation is proposed by \cite{nott2009generalization}, in which \eqref{eq:target} is approximated by a product of scalar blocks, after a preliminary variable transformation to achieve approximate orthogonality. 

In this paper we propose an improved Laplace approximation for integrals of the form \eqref{eq:target} that, unlike the standard Laplace formula, can account for skewness and non-Gaussian tails in the integrand function. Moreover, we show that the proposed method has relative approximation error of order $O(n^{-3/2})$, in a standard asymptotic setting in which the sample size $n$ diverges and $d$ is fixed. The core idea of the proposed method is to build an approximation of the normalised integrand through sequential and re-normalised ratios of Laplace approximations. Finally, an approximation of the target integral \eqref{eq:target} is obtained indirectly by the ratio of the un-normalised integrand over the approximation of the normalised integrand, both evaluated at a specific point $x$. Essentially, the proposed approximation of \eqref{eq:target} can be written as $I^{\texttt{iL}} = \hat{c}\,I^{\texttt{L}}$, where $\hat{c}>0$ is the improvement over the standard Laplace approximation, and $I = I^{\texttt{iL}}\{1+O(n^{-3/2})\}$.



Compared to the standard Laplace approximation, the proposed method requires repeated conditional minimisations and repeated evaluations of the log-integrand function and its Hessian matrix. Conditional minimisations can be computationally demanding in high dimensions. Therefore, an alternative version is introduced, which uses approximate conditional minima obtained through a first order Taylor series expansion around the global minimum. This alternative version reduces the computational time while keeping comparable accuracy with respect to the original version. Nevertheless, the most demanding task is the computation of the global minimum, which is a requirement also for the standard Laplace method. 
 
The rest of the article is structured as follows. Section 2 introduces the improved Laplace approximation. Section 3 illustrates the method in examples in which comparison with alternative approximations are also given. Section 4 concludes with some final remarks.


\section{The improved Laplace approximation}
\label{sec:2}
Let $p(x) = H(x)/I$ be the density function which corresponds to the kernel $H(x) = \exp\{-h(x)\}$ with normalising constant $I$. By the identity
\begin{equation}
\hfill I = \frac{H(x)}{p(x)}\,,\hfill
\label{eq:malikid}
\end{equation}
if $p(x)$ is known then $I$ is readily available, for an arbitrary $x$. Alternatively, if a suitable estimate $\hat{p}(x)$ of ${p}(x)$ is available, \eqref{eq:malikid} provides an estimate $\hat{I}$ of $I$, given by $H(x)/\hat{p}(x)$. For instance, \eqref{eq:malikid} has been used to estimate Bayesian marginal likelihoods in MCMC settings \citep{chib1995,chib2001,hsiao2004bayesian}, and {to approximate hidden} Gaussian Markov random fields \citep{rue2004approximating,rue2009approximate}. 

While \eqref{eq:malikid} holds for any $x$,  it is advisable to locate such a point at a high density region \citep[][]{chib1995}. One possibility is to choose $x=\hat{x}$, which may be also convenient from a computational point of view. {In MCMC settings, \cite{hsiao2004bayesian} show that coordinate points other than $\hat{x}$ may improve the approximation error. However, locating such points can be computationally intensive.}

Let $x_{1:q} = (x_1,\ldots,x_q)$ be the first $q$ and $x_{q+1:d}$ the last $d-q$ components of $x$ ($q<d$). Moreover, let $\hat{x}_{x_1}$ be the conditional minimum of $h(\cdot)$ with $x_1$ fixed and let $\hat{x}_{\hat{x}_{1:q}, x_{q+1}}$ be the conditional minimum with $x_{1:q}$ fixed at $\hat{x}_{1:q}$ and $x_{q+1}$ fixed. We require that $h(\cdot)$ satisfies the usual regularity conditions for the validity of the Laplace approximation \citep[see, e.g.,][]{kass1990validity}.

Write $p(x)$ as
\begin{eqnarray}\nonumber
p(x) &=& p_{X_1}(x_1)\times p_{X_2|X_1}(x_2|x_1)\times\cdots \times p_{X_d|X_{1:d-1}}(x_d|x_{1:d-1})\\
& = & \frac{\int_{\Real^{d-1}} H(x)\,\text{d}x_{2:d}}{\int_{\Real^d} H(x)\,\text{d}x}\times\frac{\int_{\Real^{d-2}} H(x)\,\text{d}x_{3:d}}{\int_{\Real^{d-1}} H(x)\,\text{d}x_{2:d}}\times\cdots\times
\frac{H(x)}{\int_{\Real} H(x)\,\text{d}x_{d}}\,.
\label{eq:expanded}
\end{eqnarray}
An improved approximation of $p(x)$ can be obtained by approximating the integrals of each ratio on the right hand side of \eqref{eq:expanded} through the Laplace formula. Specifically, the Laplace approximation of the marginal density $p_{X_1}(x_1)$ is
\begin{equation}
\hat{p}_{X_1}(x_1) = \frac{H(x_1,\hat{x}_{x_{1}})}{H(\hat{x})}\left\{\frac{|\hat V|}{2\pi|V_{2:d}(x_1,\hat{x}_{x_1})|}\right\}^{1/2}\,,\label{eq:iLAmarg}
\end{equation}
where $V_{2:d}(\cdot)$ is the block (2:$d$, 2:$d$) of $V(\cdot)$. This result is due to \cite{tk1986}. 
For the $q$th conditional density in \eqref{eq:expanded} ($2\leq q < d$), we apply the Laplace approximation to the numerator and the denominator and obtain
\begin{eqnarray}
\hat{p}_{X_q|X_{1:q-1}}(x_q|x_{1:q-1}) = \frac{\hat{p}_{X_{1:q}}(x_{1:q})}{\hat{p}_{X_{1:q-1}}(x_{1:q-1})}\,.\label{eq:condfree}
\end{eqnarray}
Finally, the conditional density $p_{X_d\vert X_{1:d-1}}(\cdot)$, approximated by applying the univariate Laplace method to the integral in the denominator, is
\begin{eqnarray}
\hat{p}_{X_d|X_{1:d-1}}(x_d|x_{1:d-1}) = \frac{H(x)}{H(x_{1:d-1},\hat{x}_{x_{1:d-1}})}\left\{\frac{V_{d:d}(x_{1:d-1},\hat{x}_{x_{1:d-1}})}{2\pi} \right\}^{1/2}\,,\label{eq:condfree2}
\end{eqnarray}
where $V_{d:d}(\cdot)$ is the $d$th element of the diagonal of $V(\cdot)$.
Using results and under the assumptions of \cite{kass1990validity} and of \cite{tk1986}, it is possible to show that \eqref{eq:iLAmarg}, \eqref{eq:condfree} and \eqref{eq:condfree2} have overall relative error of order $O(n^{-1})$. Recalling that, to use identity \eqref{eq:malikid} we only need an approximation $\hat{p}(\hat{x})$ of $p(\hat{x})$, we might be tempted to take as $\hat{p}(\hat{x})$ the product of \eqref{eq:iLAmarg} times \eqref{eq:condfree} (for $2\leq q <d$) times \eqref{eq:condfree2}, all evaluated at $\hat{x}$. However, such a product, when replaced in \eqref{eq:malikid}, reproduces exactly \eqref{eq:lap}, the Laplace approximation of $I$. 

To achieve third-order accuracy we propose to re-normalise numerically \eqref{eq:iLAmarg}, \eqref{eq:condfree} and \eqref{eq:condfree2}. Re-normalisation of \eqref{eq:condfree} and \eqref{eq:condfree2} entails the evaluation of multi-dimensional numerical integrations. While this is true in general, in our case we only need an approximation for $p(\hat{x})$ and therefore it is still possible to re-normalise \eqref{eq:condfree} and \eqref{eq:condfree2} by using only scalar numerical integration. The key point is to fix all the conditioning variables at the corresponding modal values prior to the re-normalisations, as explained in Scheme~\ref{alg:iLaplace}.

The product of the re-normalised versions of \eqref{eq:iLAmarg}, \eqref{eq:condfree} and \eqref{eq:condfree2}, evaluated at $\hat{x}$, gives a third-order approximation of $p(\hat{x})$, as shown by the following theorem.

\begin{scheme}
 \caption[]{\label{alg:iLaplace} Pseudo-code description of the improved Laplace method.}
  \SetAlgoLined
  \begin{description}
  \item[Step 1] Compute $\hat{c}_1$, the normalising constant of \eqref{eq:iLAmarg}\; 
  \item[Step 2] For each $q$ ($2\leq q<d$), compute $\hat{c}_q$, the normalising constant of \eqref{eq:condfree} with all the conditioning variables fixed at their modal values\;
  \item[Step 3] Compute $\hat{c}_d$, the normalising constant of \eqref{eq:condfree2} with all the conditioning variables fixed at the corresponding modal values\;
  \item[Step 4] Set $\hat{p}^{\texttt{iL}}(\hat{x}) = \frac{\hat{p}(\hat{x}_1)}{\hat{c}_1}\left\{\prod_{q=2}^{d-1} \frac{\hat{p}(\hat{x}_q\vert \hat{x}_{1:q-1})}{\hat{c}_q}\right\}\frac{\hat{p}(\hat{x}_d\vert \hat{x}_{1:d-1})}{\hat{c}_d}$ as the improved Laplace approximation of $p(\hat{x})$\;
  \item[Step 5] Finally, get the improved Laplace approximation $I^{\texttt{iL}} = H(\hat{x})/\hat{p}^{\texttt{iL}}(\hat{x})$ of $I$.
  \end{description}
\end{scheme}

\begin{theorem}
\label{th:theorem1}
Under the assumptions of \cite{kass1990validity} for the regularity of the Laplace approximation, the improved Laplace approximation of $p(\hat{x})$ has third-order accuracy, i.e.
\[
p(\hat{x})= \hat p^{\texttt{iL}}(\hat{x}) \{1+O(n^{-3/2})\}
\,.\]
\end{theorem}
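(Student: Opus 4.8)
The plan is to reduce the $d$-dimensional statement to a product of scalar claims and to show that the numerical re-normalisation in Scheme~\ref{alg:iLaplace} strips off the leading $O(n^{-1})$ error of each factor. I would start from the exact telescoping identity \eqref{eq:expanded} evaluated at $\hat{x}$,
\[
p(\hat x) = p_{X_1}(\hat x_1)\prod_{q=2}^{d}p_{X_q\mid X_{1:q-1}}(\hat x_q\mid \hat x_{1:q-1}),
\]
and match it factor by factor against the product defining $\hat p^{\texttt{iL}}(\hat x)$ in Step 4. Since $d$ is fixed, it suffices to prove that each re-normalised factor reproduces the corresponding true conditional density at $\hat{x}$ with relative error $O(n^{-3/2})$; the $d$ relative errors then add and the product inherits that order.

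For the generic factor, let $p_q(\cdot)$ be the true (conditional) density in $x_q$ with the conditioning variables fixed at their modal values, and $\hat p_q$ its Laplace approximation from \eqref{eq:iLAmarg}, \eqref{eq:condfree} or \eqref{eq:condfree2}. Using the $O(n^{-1})$ accuracy already recorded above, I would write $\hat p_q(x_q) = p_q(x_q)\{1 + r_q(x_q)\}$ with $r_q = O(n^{-1})$. The re-normalising constant is $\hat c_q = \int \hat p_q\,\dd x_q = 1 + E_{p_q}[r_q]$ exactly, so
\[
\frac{\hat p_q(\hat x_q)}{\hat c_q} = p_q(\hat x_q)\,\frac{1 + r_q(\hat x_q)}{1 + E_{p_q}[r_q]} = p_q(\hat x_q)\bigl\{1 + r_q(\hat x_q) - E_{p_q}[r_q]\bigr\}\{1 + O(n^{-2})\}.
\]
Hence the problem reduces to bounding $r_q(\hat x_q) - E_{p_q}[r_q]$: any $x_q$-independent part of the $O(n^{-1})$ error, in particular the global Laplace correction and, for the intermediate factors $2\le q<d$, the piece inherited from the lower-dimensional marginal approximation, cancels between numerator and denominator. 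This cancellation is exactly the gain bought by re-normalisation, and it is why the product in Step 4 improves on the raw product (which by construction merely reproduces \eqref{eq:lap}).

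It then remains to expand $r_q$ about the conditional mode. The regularity conditions of \cite{kass1990validity} guarantee that $p_q$ concentrates at rate $n^{-1/2}$ with variance of order $n^{-1}$ and a discrepancy between mean and mode of order $n^{-1}$, while the leading coefficient of $r_q$ is a smooth $O(1)$ function of $x_q$ divided by $n$. Consequently $r_q(\hat x_q) - E_{p_q}[r_q]$ is the value-at-mode minus the $p_q$-average of a slowly varying $O(n^{-1})$ function, and is therefore smaller than $r_q$ itself by at least a factor $n^{-1/2}$, i.e. $O(n^{-3/2})$. The main obstacle is precisely this last step carried out rigorously: the Laplace remainder is controlled only on a shrinking neighbourhood of the mode, so one must split off the tails and justify integrating the local expansion term by term uniformly in $n$; one must track the orders of the relevant moments under the merely approximately Gaussian $p_q$; and for the factors written as ratios $\hat p_{X_{1:q}}/\hat p_{X_{1:q-1}}$ one must check that the $x_q$-independent pieces genuinely drop out and that evaluating at the global-mode coordinate $\hat x_q$ rather than the exact conditional mode costs only $O(n^{-3/2})$. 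Throughout, the existence and smoothness of the conditional minima used above follow from $\hat V$ being positive definite via the implicit function theorem.
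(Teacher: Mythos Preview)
Your proposal is correct in outline and rests on the same mechanism as the paper: numerical re-normalisation absorbs the leading $O(n^{-1})$ relative error of each factor, leaving $O(n^{-3/2})$. The route, however, differs from the paper's in two respects.

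First, the paper does not attempt to bound $r_q(\hat x_q)-E_{p_q}[r_q]$ by a concentration-of-measure argument. Instead it simply invokes the existing Tierney--Kadane result that the re-normalised Laplace approximation of a marginal density is third-order accurate, together with Theorem~6 of \cite{kass1990validity} for the required uniformity. What you are sketching in your ``main obstacle'' paragraph is essentially a re-derivation of that Tierney--Kadane result; the paper bypasses this work by citation. Your approach would be more self-contained but substantially longer to make rigorous, whereas the paper's is short precisely because the hard analytic work is delegated.

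Second, and more interestingly, the paper handles the intermediate conditional factors $2\le q<d$ by a different device. Rather than analysing the one-dimensional re-normalisation of $\hat p_{X_q\mid X_{1:q-1}}$ directly, it writes the conditional as the ratio $p_{X_{1:q}}/p_{X_{1:q-1}}$, applies the Tierney--Kadane third-order result to \emph{each} marginal with its \emph{full} multivariate normalising constant $c_q=\int_{\Real^q}\hat p_{X_{1:q}}\,\dd x_{1:q}$ and $c_{q-1}$, obtains $O(n^{-3/2})$ for the ratio, and only then shows algebraically that $c_q$ and $c_{q-1}$ cancel so that the one-dimensional $\hat c_q$ actually computed in Scheme~\ref{alg:iLaplace} suffices. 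This sidesteps your concern about ``the $x_q$-independent pieces genuinely dropping out'' by construction. The paper also notes that the last factor \eqref{eq:condfree2} becomes \emph{exact} after re-normalisation, which in your framework is the trivial case.

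A small point: your worry about ``evaluating at the global-mode coordinate $\hat x_q$ rather than the exact conditional mode'' is unnecessary. With the conditioning variables fixed at $\hat x_{1:q-1}$, the conditional mode of $x_q$ \emph{is} $\hat x_q$, since $\hat x$ is a joint stationary point.
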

\begin{proof}
The first step is to show that the approximation error of \eqref{eq:iLAmarg} and \eqref{eq:condfree} holds uniformly. The uniformity of \eqref{eq:iLAmarg} has been already shown by \citeauthor[][]{kass1990validity} (1990, Theorem 6). Furthermore, on the basis of Theorem 6 of \cite{kass1990validity}, we can show that also \eqref{eq:condfree} holds uniformly. This is immediate as \eqref{eq:condfree} is the ratio of the Laplace approximation of the marginal density of $X_{1:q}$ over that of $X_{1:q-1}$ ($2\leq q < d$), both with relative error of order $O(n^{-1})$ holding uniformly. Hence the error in \eqref{eq:condfree} is also uniform. 

\cite{tk1986} show that, after numerical re-normalisation, \eqref{eq:iLAmarg} has error of order $O(n^{-3/2})$. This is because the $O(n^{-1})$ term, when uniform, gets absorbed into the normalising constant. To complete the proof we need to show that also \eqref{eq:condfree}, after numerical re-normalisation with respect to $x_q$ and with the conditioning variables fixed at the corresponding modal values, has error of order $O(n^{-3/2})$. To prove this, let $\hat{p}^*_{X_q|X_{1:q-1}}(x_q|\hat{x}_{1:q-1})$ be the re-normalised approximate conditional density of $X_q|X_{1:q-1}$ with the conditioning variables fixed at the modal values, i.e.
\[
\hat{p}^*_{X_q|X_{1:q-1}}(x_q|\hat{x}_{1:q-1}) = \frac{c_q^{-1}\hat{p}_{X_{1:q}}(\hat{x}_{1:q-1},x_{q})}{c_{q-1}^{-1}\hat{p}_{X_{1:q-1}}(\hat{x}_{1:q-1})}\,,
\]
where $c_q = \int_{\Real^q} \hat{p}_{X_{1:q}}(x_{1:q})\,\dd x_{1:q}$ and $c_{q-1} = \int_{\Real^{q-1}} \hat{p}_{X_{1:q-1}}(x_{1:q-1})\,\dd x_{1:q-1}$. This re-normalised approximate conditional density is third-order accurate, i.e.
\begin{align*}
p_{X_{q}|X_{1:q-1}}(x_q|\hat{x}_{1:q-1}) \,=\, & \frac{c_{q}^{-1}\hat{p}_{X_{1:q}}(\hat{x}_{1:q-1},x_q)\{1+O(n^{-3/2})\}}{c_{q-1}^{-1}\hat{p}_{X_{1:q-1}}(\hat{x}_{1:q-1})\{1+O(n^{-3/2})\}}  \\
 \,=\, &\frac{c_{q}^{-1}\hat{p}_{X_{1:q}}(\hat{x}_{1:q-1},x_q)}{c_{q-1}^{-1}\hat{p}_{X_{1:q-1}}(\hat{x}_{1:q-1})} \{1+O(n^{-3/2})\}\\
 \,=\, & \hat{p}^*_{X_q|X_{1:q-1}}(x_q|\hat{x}_{1:q-1}) \{1+O(n^{-3/2})\}\,.
\end{align*}
{However,} note that there is no need to compute $c_{q-1}$ and $c_q$, because
\begin{align*}
\hat{p}^*_{X_q|X_{1:q}}(x_q|\hat{x}_{1:q-1}) \,=\,& \frac{\frac{c_q^{-1}\hat{p}_{X_{1:q}}(\hat{x}_{1:q-1},x_{q})}{c_{q-1}^{-1}\hat{p}_{X_{1:q-1}}(\hat{x}_{1:q-1})}}{\mathop{\mathlarger{\int_{\Real}}}\frac{c_q^{-1}\hat{p}_{X_{1:q}}(\hat{x}_{1:q-1},x_{q})}{c_{q-1}^{-1}\hat{p}_{X_{1:q-1}}(\hat{x}_{1:q-1})}\,\dd x_q} \\
 \,=\, & \frac{\hat{p}_{X_{1:q}}(\hat{x}_{1:q-1},x_{q})/\hat{p}_{X_{1:q-1}}(\hat{x}_{1:q-1})}{\int_{\Real} \hat{p}_{X_{1:q}}(\hat{x}_{1:q-1},x_{q})/\hat{p}_{X_{1:q-1}}(\hat{x}_{1:q-1})\,\dd x_q} \\
 \,= \, & \frac{\hat{p}_{X_{1:q}}(\hat{x}_{1:q-1},x_{q})/\hat{p}_{X_{1:q-1}}(\hat{x}_{1:q-1})}{\hat{c}_q}.
\end{align*}
That is, we need only to re-normalise \eqref{eq:condfree} with the conditioning variables fixed prior to integration at their modal values, i.e. to compute $\hat{c}_q$. {Note that, after numerical re-normalisation with the conditioning variables fixed, approximation \eqref{eq:condfree2} becomes exact.} \qedwhite
\end{proof}

Finally, the replacement of $p(\hat{x})$ with $\hat{p}^{\texttt{iL}}(\hat{x})$ in \eqref{eq:malikid} delivers the improved Laplace approximation of \eqref{eq:target}. Or equivalently, the improved Laplace approximation can be written as $I^{\texttt{iL}} = \hat{c}I^{\texttt{L}}$, with $\hat{c} = \prod_{i = 1}^{d}\hat{c}_i$ and $\hat{c}_i$ ($1\leq i \leq d$) defined in Scheme~\ref{alg:iLaplace}.

\begin{paragraph}{Remark 1} The factor $\hat c$ is an index of the magnitude of the improvement of the proposed method over the standard Laplace approximation. Indeed, values of $\hat{c}$ close to 1 indicate that the improved Laplace approximation is not improving over the standard Laplace. In this case, it is likely that the integrand is Gaussian-like. On the other hand, $\hat{c}\neq 1$ indicates that the integrand may not be Gaussian-like, e.g. it may be skewed and/or heavy-tailed. Note also that if $I$ factors as the product of $d$ scalar integrals, then the improved Laplace approximation of $I$ corresponds to its computation via numerical integration. 
\end{paragraph}

\begin{paragraph}{Remark 2}
An important difference of the proposed method from the integrated nested Laplace approximation (INLA) of \cite{rue2009approximate} is that our method is specifically designed to approximate normalising constants or marginal likelihoods for general models, and for arbitrary components of the parameter for both Bayesian and frequentist inference. For instance, the method can be used to approximate \eqref{eq:mlGLMM} even when random effects are not necessarily Gaussian. On the other hand, INLA is designed for approximating marginal posterior distributions and can approximate only Bayesian marginal likelihoods of Gaussian latent fields \citep[see, Eq. (30) of][]{rue2009approximate}. 
Some numerical comparison with INLA are provided in Section~\ref{sec:inla}. 
\end{paragraph}

\begin{paragraph}{Remark 3}
The order $(x_1,\ldots,x_d)$ is arbitrary, that is, the asymptotic error of the improved Laplace is not affected by their permutation. In practice, however, it may be useful to order $x$ according to the cardinality of the arguments of each element of $\nabla h(x)$, the gradient of $h(\cdot)$. In particular, the element of $x$ for which the corresponding element of $\nabla h(x)$ depends on all elements of $x$ may be placed as the first factor in \eqref{eq:expanded}. For instance, if $d=3$ and $\partial h(x)/\partial x_1$ depends on $x_1$, $\partial h(x)/\partial x_2$ depends on $x_{1:2}$ and $\partial h(x)/\partial x_3$ depends on $x$, then the order $(x_3,x_2,x_1)$, with $x_3$ being the first factor in \eqref{eq:expanded}, can simplify the conditional minimisations required by the proposed method. Obviously, when each element of $\nabla h(x)$ depends on $x_{1:3}$, there is no preferred ordering. In the examples considered in Section 3 we did not experience any practical difference in the results across different permutations of $x$.
\end{paragraph}

\begin{paragraph}{Remark 4}
Conditional minimisations can become computationally demanding when $d$ is large. Nevertheless, it is possible to avoid them by considering a first-order Taylor series expansion of the conditional minima as in \cite{cox1990approximation}. In particular, let $x =(y, z)$, where $y$ is the fixed block and $z$ the remaining part of $x$. Then $\hat{z}_{y}$, the conditional minimum of $h(\cdot)$ for fixed $y$, can be approximated by the linear regression
\begin{equation}
\tilde{z}_{y} = \hat{z} + V_{zz}^{-1}V_{zy}(\hat{y}-y)\,,\label{eq:appmode}
\end{equation}
which is such that $\hat{z}_{y} -\tilde{z}_{y} = O(n^{-1})$. Recently, \cite{kharroubi2016exponential} applied a similar idea in a different context and noted excellent performance (see also \citealp{ruli2014marginal}). We explore the numerical performance of the improved Laplace approximation with approximate conditional minima $\tilde{z}_y$ in place of their exact version $\hat{z}_y$ in Examples~\ref{sec:multT}, \ref{sec:inla}, \ref{sec:salam} and \ref{sec:rongelap}.
\end{paragraph}

\begin{paragraph}{Remark 5}
Linear constraints $Ax = b$, with $A$ and $b$ being appropriate matrices and vectors respectively, can be handled through a change of variables problem and by applying the proposed method to the remaining free components of $x$. Finally, when the distribution of $x_1|x_2$ is more Guassian-like and $x_2$ is low-dimensional, as it happens in the INLA framework, then it may be more sensible to approximate $x_1|x_2$ by the proposed method and integrate out $x_2$ by numerical integration. 
\end{paragraph}

\section{Examples}
\label{sec:examples}
The improved Laplace approximation is implemented in the \texttt{R} \citep{R-cran} package \texttt{iLaplace} \citep{ilaplace}, available on the \texttt{CRAN} repository.
Except for the example of Section~\ref{sec:gompertz}, computations with the improved Laplace approximation, with either exact or approximate conditional minima, are performed in parallel over 11 threads through the parallel implementation provided in the package \texttt{iLaplace}. Essentially, it is an embarrassingly parallel implementation in which each integral in Steps 1-3 of Scheme~\ref{alg:iLaplace} is computed through a separate thread.
  

\subsection{Gompertz distribution: fixed $d$ and $n\to\infty$}
\label{sec:gompertz}
Consider the sequence of sample sizes $\{n_i = \ceil{n_{i-1} + 1.2\sqrt{n_{i-1}}}, n_1 = 20, \text{and}\;\, i=2,\ldots,30\}$, where the symbol $\ceil{\cdot}$ denotes the ceiling function. Let $y^i = (y_1,\ldots,y_{n_i})$ be a random sample of size $n_i$ ($i=2,\ldots,30$) from the Gompertz distribution, with density $$p(y;\theta) = \alpha\beta e^{\beta y}e^{\alpha}\exp\{-\alpha e^{\beta y}\},\,$$ with $\theta = (\theta_1,\theta_2) = (\log\alpha,\log\beta)\in\Real^2$ and $y>0$. Moreover, for each $i$, consider 100 random datasets of size $n_i$ from the Gompertz distribution with $\alpha=2$ and $\beta = 3$. For each of these datasets, we compute the normalising constant of the posterior distribution $\pi(\theta|y^i) \propto L(\theta; y^i)\pi(\theta)$, where $L(\theta; y^i)$ is the likelihood function for $\theta$ based on data $y^i$ and $\pi(\theta) = \pi(\theta_1)\pi(\theta_2)$ is the prior distribution with $\pi(\theta_1)$ and $\pi(\theta_2)$ both being $N(0,100)$. 

The aim is to compare the behaviour of the standard ($\hat{I}^{\,\text{L}}$) and the improved ($\hat{I}^{\,\text{iL}}$) Laplace approximations with the target value ($I$) computed by adaptive numerical integration, as the sample size $n$ diverges. Similarly to \cite{diciccio2008conditional} and \cite{davison2006improved}, let $a_1>0$, $a_2>0$, $(b_1,b_2)\in\Real^2$ and suppose that
$$
I = \hat{I}^{\,\text{iL}}(1 + b_1 n^{-a_1}) + o(n^{-a_1}),
$$
and that
$$
I = \hat{I}^{\,\text{L}}(1 + b_2 n^{-a_2}) + o(n^{-a_2}),
$$
 for $n\to\infty$. Then, $\lim\limits_{n\to\infty}\{\hat{I}^{\,\text{iL}}/I\} = 1$ and $\lim\limits_{n\to\infty}\{\hat{I}^{\,\text{L}}/I\} = 1$, and if the improved Laplace approximation is more accurate than the standard Laplace, then the first limit should converge faster. {Furthermore, a log-log graph of $|I/\hat{I}^{\,\text{iL}}-1|$  ($|I/\hat{I}^{\,\text{L}}-1|$) against $n$ should be be linear with slope $-a_1$ ($-a_2$) and intercept $\log|b_1|$ ($\log|b_2|$)}.
 
The left panel of Figure~\ref{fig:1} reports the log-log plot of $\hat{I}^{\,\text{iL}}/I$ and $\hat{I}^{\,\text{L}}/I$, both averaged across the 100 repetitions at each value of and against the sample size $n$. This plot highlights that the improved Laplace approximation is more accurate than the standard Laplace method, since the visual convergence at 1 of $\hat{I}^{\,\text{iL}}/I$ happens at a faster rate.

\begin{figure}
\centering
\includegraphics[scale = 0.63, angle=-90]{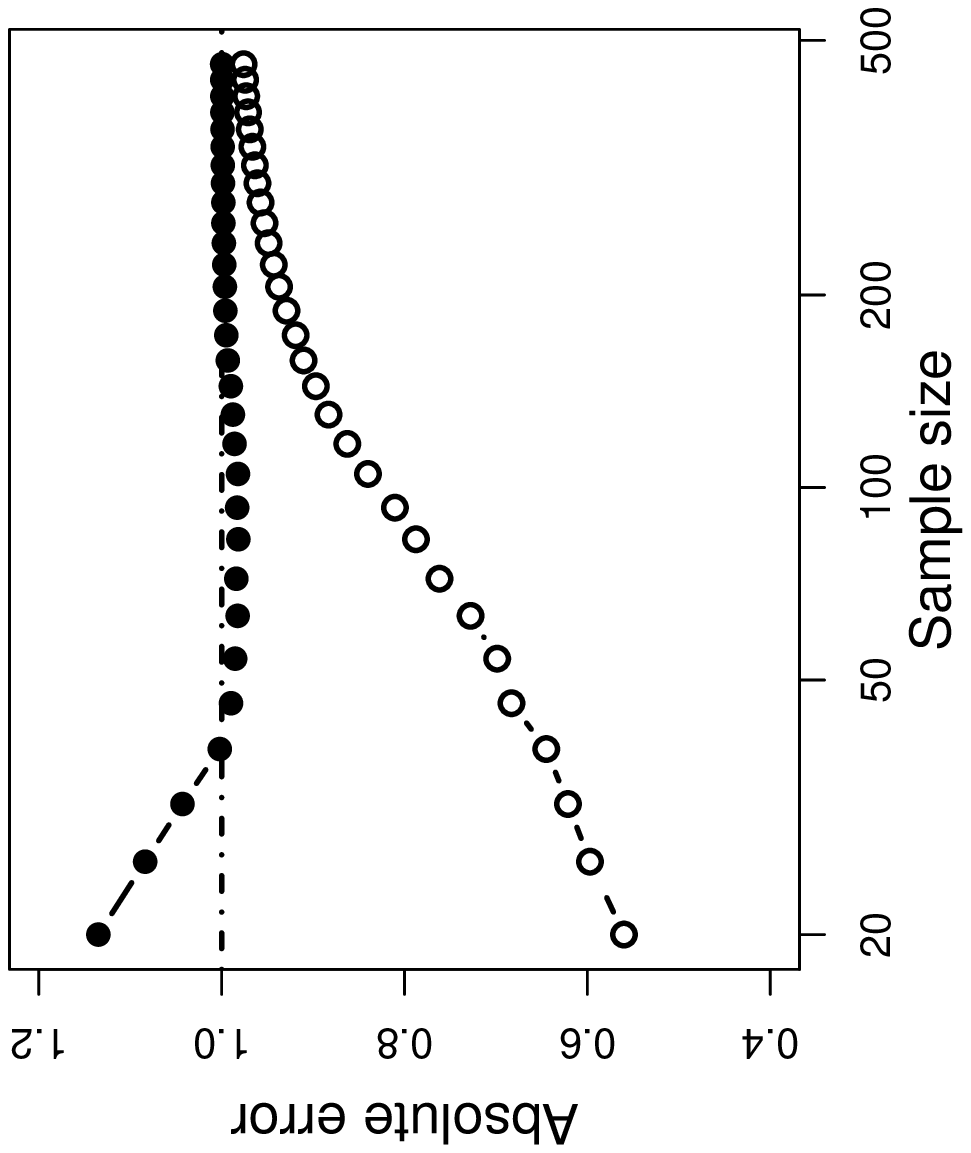}
\includegraphics[scale = 0.63, angle=-90]{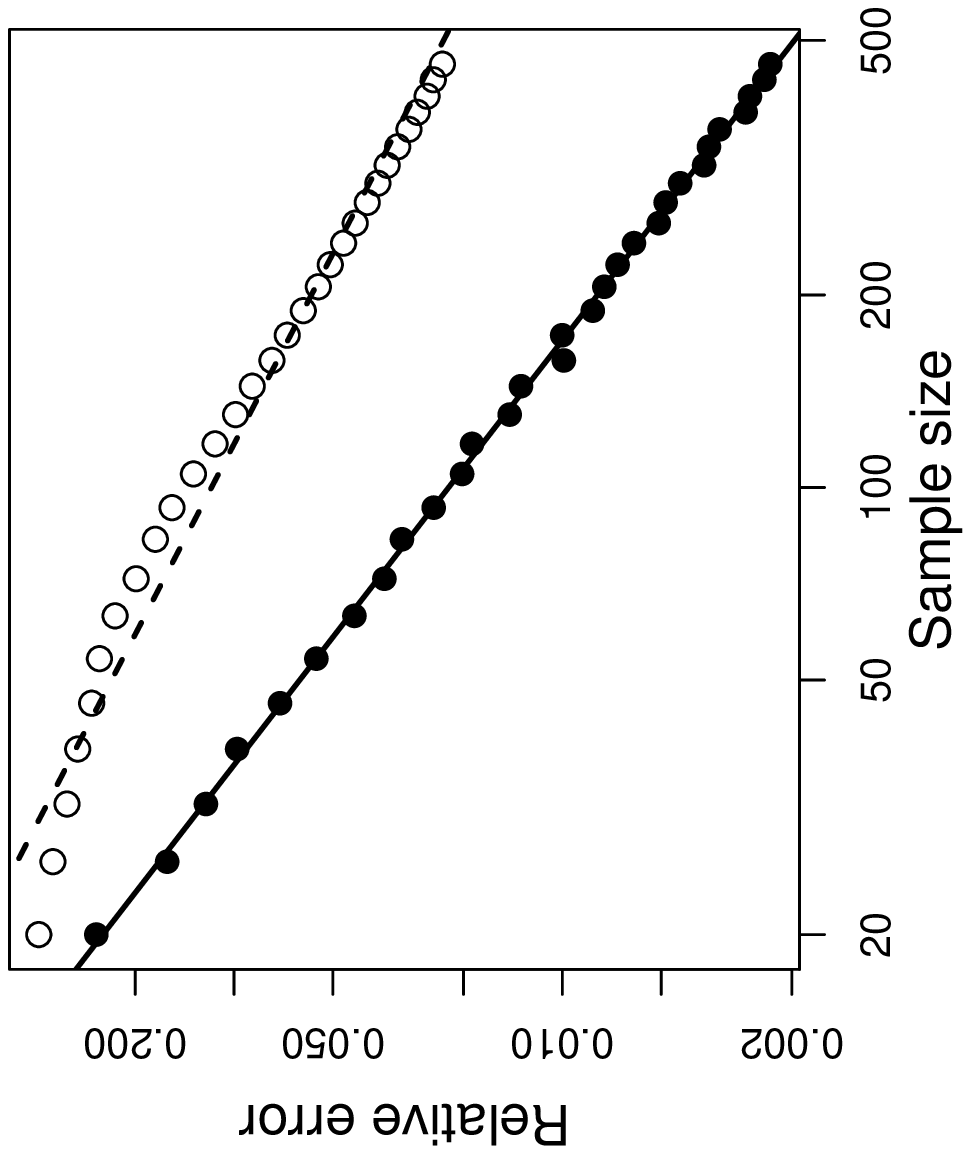}
\caption{Numerical evaluation of the asymptotic error of the improved and standard Laplace methods. Left panel: log-log plot of $(\hat{I}^{\,\text{iL}}/I)$ ($\bullet$) and $ (\hat{I}^{\,\text{L}}/I)$ ($\circ$) versus $n$. Right panel: log-log plot of the relative error $|\hat{I}^{\,\text{iL}}/I -1|$ ($\bullet$)  and $|\hat{I}^{\,\text{L}}/I -1|$ ($\circ$)  versus $n$; the solid and dashed lines are the corresponding least-squares regression lines. The empirical slope is -1.51, with 0.99 confidence interval (-1.53, -1.48), for the solid line, and -1.01, with 0.99 confidence interval (-1.09, -0.93), for the dashed line.}
\label{fig:1}
\end{figure}

The log-log plot of the relative error averaged across the 100 repetitions at each value of the sample size $n$ is shown on the right panel of Figure~\ref{fig:1}. This shows that the improved Laplace method achieves third-order accuracy whereas the standard Laplace formula is second-order accurate, e.g.  $a_1=1.5$ and $a_2=1$.

\subsection{Multivariate $t$/skew-$t$}
\label{sec:multT}
To assess the accuracy of the proposed method even in extreme settings, we consider the multivariate $t$/skew-$t$ distribution of \cite{jones2002marginal}, with density 

\begin{eqnarray}\nonumber
p(y;\nu,a,c)&=&\frac{\Gamma((\nu+d)/2)}{\Gamma((\nu+1)/2)B(a,c)(a+c)^{1/2}2^{a+c-1}(\nu\pi)^{(d-1)/2}}\\\nonumber
&& \times \frac{(1+\nu^{-1}y_1^2)^{(v+1)/2}\left(1+\frac{y_1}{(a+c+y_1^2)^{1/2}}\right)^{a+1/2}\left(1-\frac{y_1}{(a+c+y_1^2)^{1/2}}\right)^{c+1/2}}{\{1+\nu^{-1}(y_1^2+\ldots+y_d^2)\}^{(\nu+d)/2}}\,.
\end{eqnarray} 
The positive parameters $a$ and $c$ determine the distribution of the skewed marginal, the parameter $\nu$, i.e. the degrees of freedom (df), controls the tail behaviour of the distribution, and $B(\cdot,\cdot)$ and $\Gamma(\cdot)$ are the beta and gamma functions, respectively. This distribution is obtained from the multivariate Student's $t$-density centred at $0$ and with identity scale matrix, where the marginal density of the first component is replaced with the univariate $t$/skew-$t$ density. The case with $a = c = \nu/2$ leads the the ordinary multivariate Student's $t$-distribution with identity scale matrix and $\nu$ degrees of freedom.

We approximate the normalizing constant of the multivariate $t$/skew-$t$ density, with the standard and the improved Laplace approximations, in two scenarios: the first with $a = c = 1.5$, and the second with $a = 12$ and $c = 0.5$. For each scenario, we consider  multivariate $t$/skew-$t$ densities with varying dimension and degrees of freedom. 
Results in the first row of Figure~\ref{fig:skewt} show that the standard Laplace approximation rapidly deteriorates with increasing dimensionality, and increasing non-Gaussianity, i.e. low $\nu$, higher skewness (large $a$ and small $c$ or vice versa). On the contrary, the improved Laplace approximation is reasonably accurate and stable across both increasing dimensionality and non-Gaussianity. A similar example has been considered also by \cite{nott2009generalization}, in order to test the accuracy of their modified Laplace approximation. However, their method is substantially less accurate then ours, only slightly improving the poor quality of the standard Laplace approximation. For instance, the normalising constant of the 10-variate $t$/skew-$t$ density with $a = 4$, $c = 1$ and $\nu = 3$, is 0.013 with the standard Laplace, 0.02 with the modified Laplace method of \cite{nott2009generalization} and 0.9981 with the improved Laplace approximation.

\begin{figure}
\centering
\includegraphics[scale = 0.62, angle=-90]{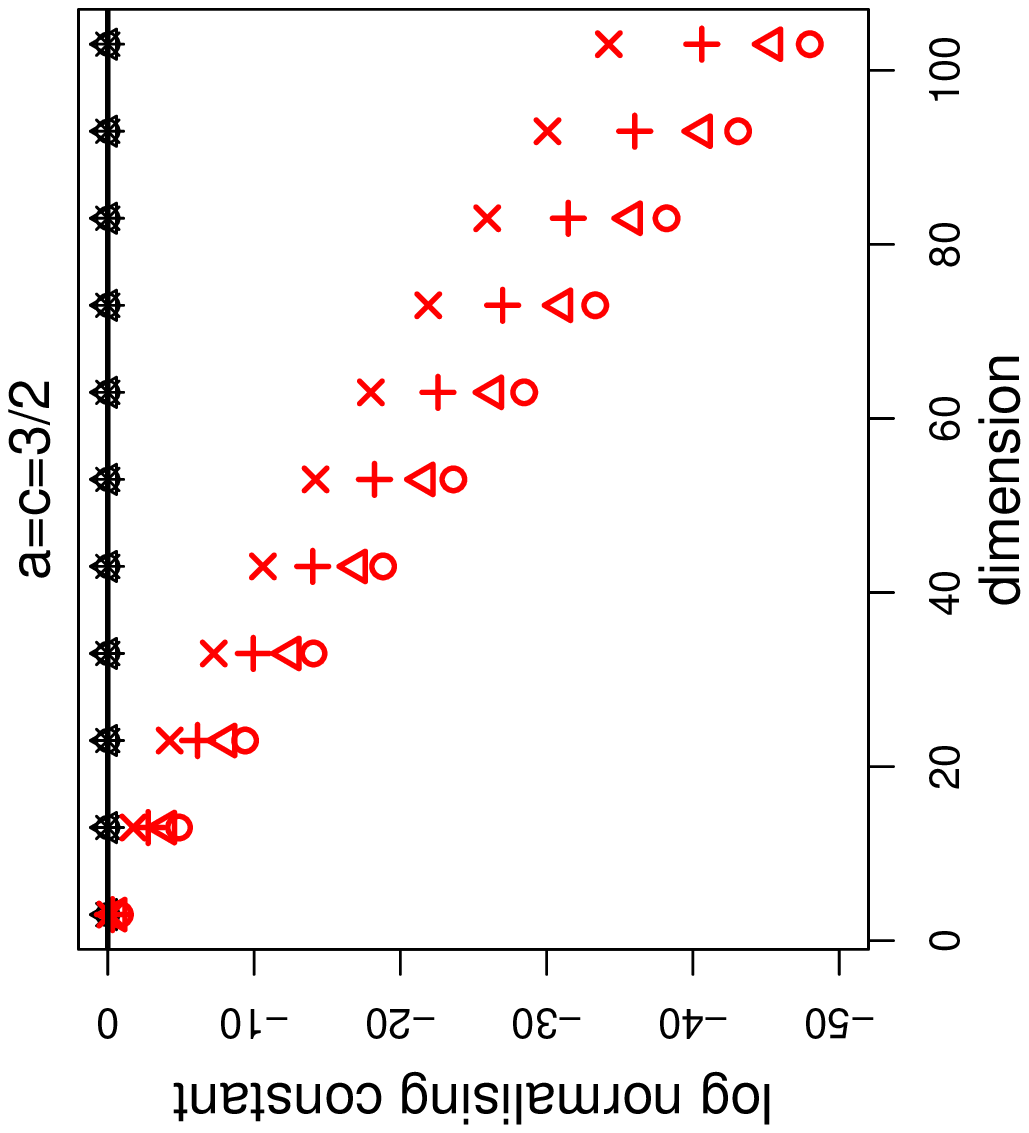}
\includegraphics[scale = 0.62, angle=-90]{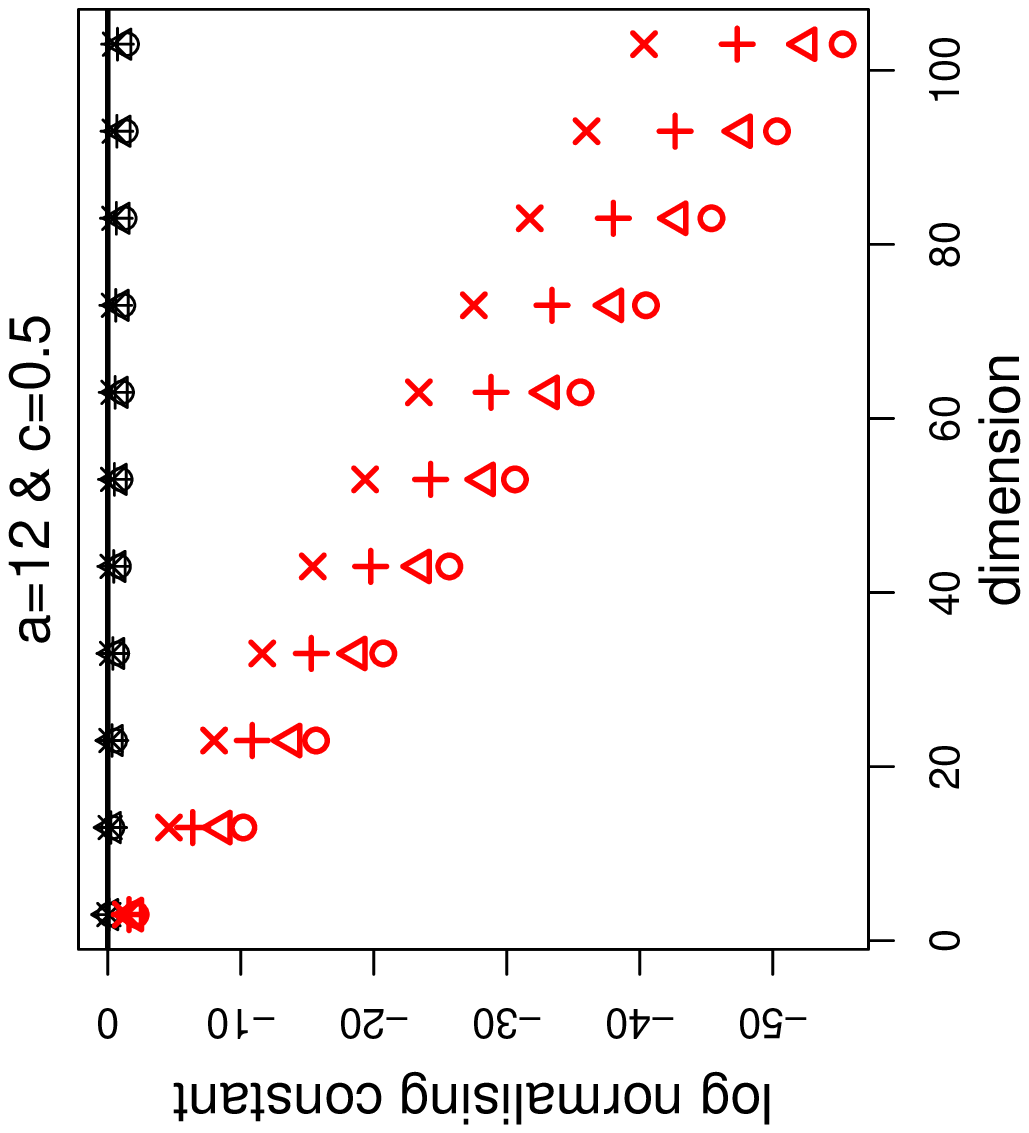}\\
\includegraphics[scale = 0.62, angle=-90]{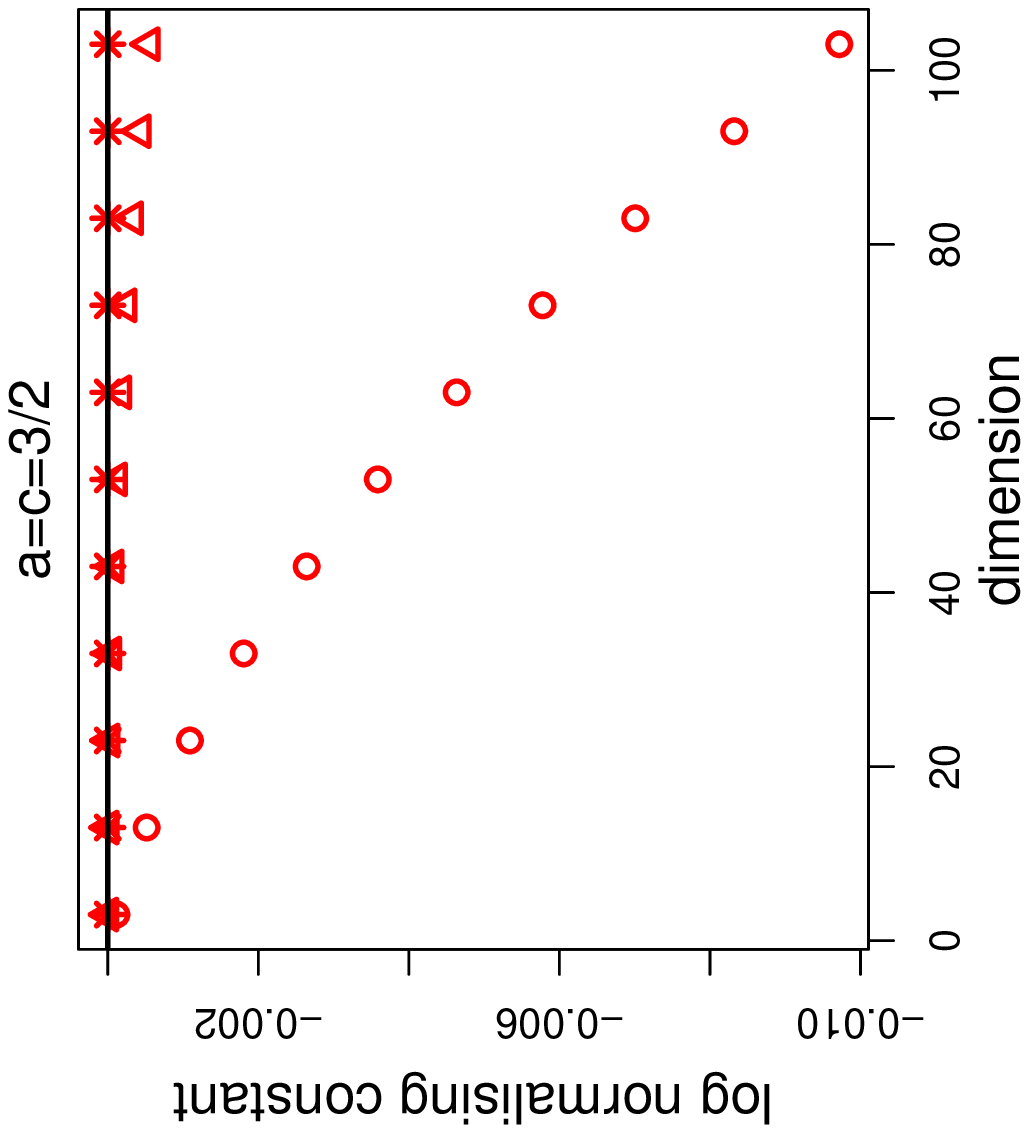}
\includegraphics[scale = 0.62, angle=-90]{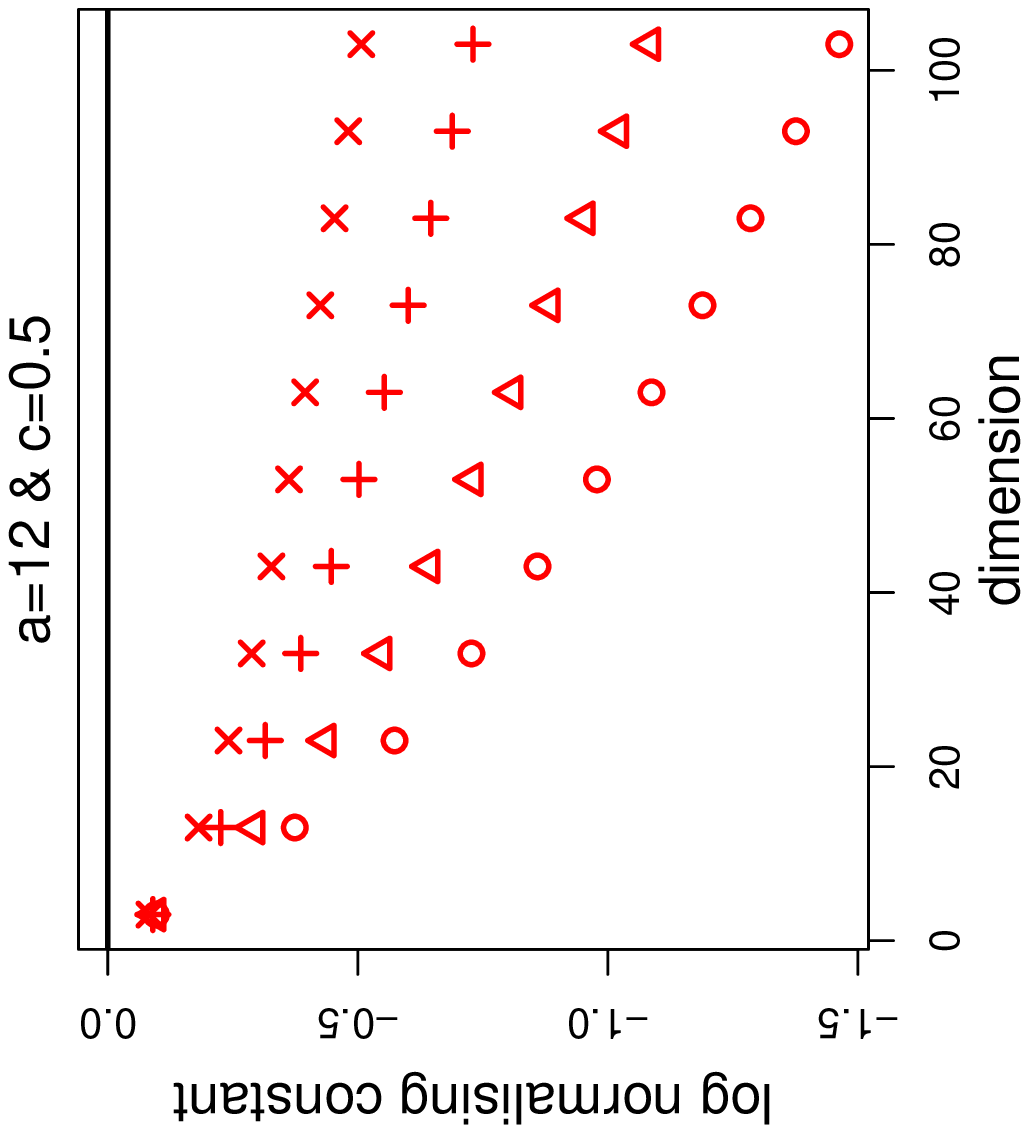}
\caption{First row: standard (red coloured symbols) and improved Laplace (black) approximations of the normalizing constant (in logarithmic scale) of the multivariate $t$/skew-$t$ density against dimension.  Second row: improved Laplace approximation with either exact or approximate conditional minima overlap. Degrees of freedom are: 3 ({\large$\circ$}), 5 ({\smaller$\triangle$}), 10 ($+$) and 20 ($\times$).}\label{fig:skewt}
\end{figure}

Now consider the same example but using the approximate conditional minima introduced in Remark 4 of Section 2.1. As shown in the second row of Figure~\ref{fig:skewt}, in this case, the results of the improved Laplace approximation with either actual or approximate conditional minima coincide.

\subsection{A comparison with INLA}
\label{sec:inla}
The following example has been considered by \cite{ferkingstad2015improving} and is known as challenging for the INLA methodology. 

Let $y=(y_1,\ldots,y_n)$ be conditionally independent binary values with 
\[
Y_i|u_i\sim\text{Bernoulli}(p_i)\,,
\]
\[
\text{logit}(p_i) = \beta + u_i\,,
\] 
where $U_i\sim N(0,\sigma^2)$, for $i=1,\ldots,n$. We assume independent priors for $\beta$ and $\nu = \sigma^{-2}$, with $\beta\sim N(0,1)$ and $\nu \sim\text{Gamma}(1,1)$ and we wish to obtain the marginal posterior distributions of $\beta$ and $\nu$. 

We apply the improved and the standard Laplace methods to approximate $L(\beta,\nu;y)$, the marginal likelihood defined in \eqref{eq:mlGLMM} under the aforementioned modelling assumptions. Since $L(\beta,\nu;y)\pi(\beta,\nu)$ is bivariate, we use adaptive numerical integration for obtaining the marginal posteriors $\pi(\beta|y)$ and $\pi(\nu|y)$. For comparison purposes, the marginal posteriors are also approximated by: MCMC, the standard version of INLA and the improved INLA proposed by \cite{ferkingstad2015improving}.  For the MCMC approximation, we consider $10^6$ final posterior samples with the \texttt{JAGS} software \citep{plummer}, after a burn-in of $10^6$ samples. Computations with INLA are done using the associated \texttt{R} package \texttt{R-INLA}, using the default options.


Note that the integral over $u = (u_1,\ldots,u_n)$, required for obtaining $L(\beta, \nu;y)$, can be factorised as product of $n$ scalar integrals. Hence, in this case the improved Laplace method is as accurate in approximating $\pi(\beta,\nu|y)$ as is numerical integration.

As an illustration, consider a sample of size $n=100$ generated from the model with $\sigma^2=1$ and $\beta=2$.
\begin{figure}
\includegraphics[scale = 0.63, angle=-90]{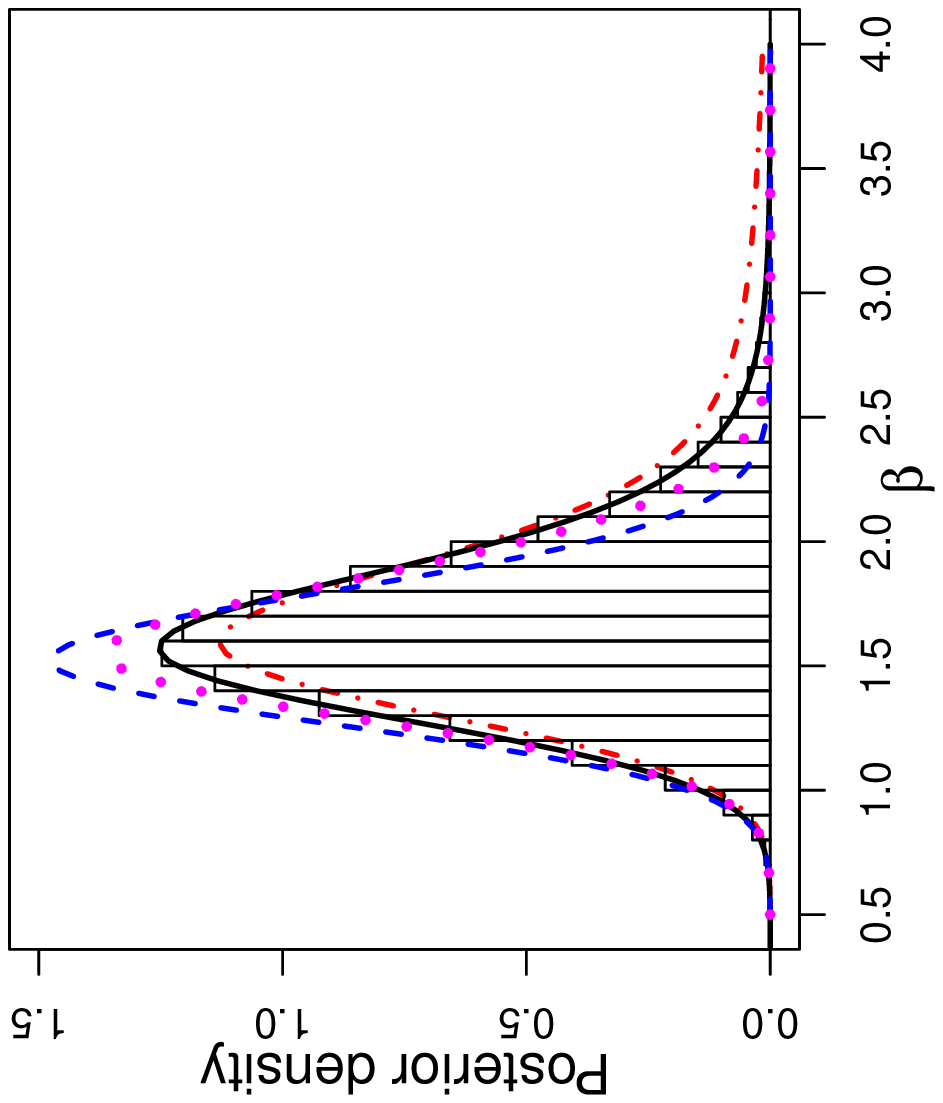}
\includegraphics[scale = 0.63, angle=-90]{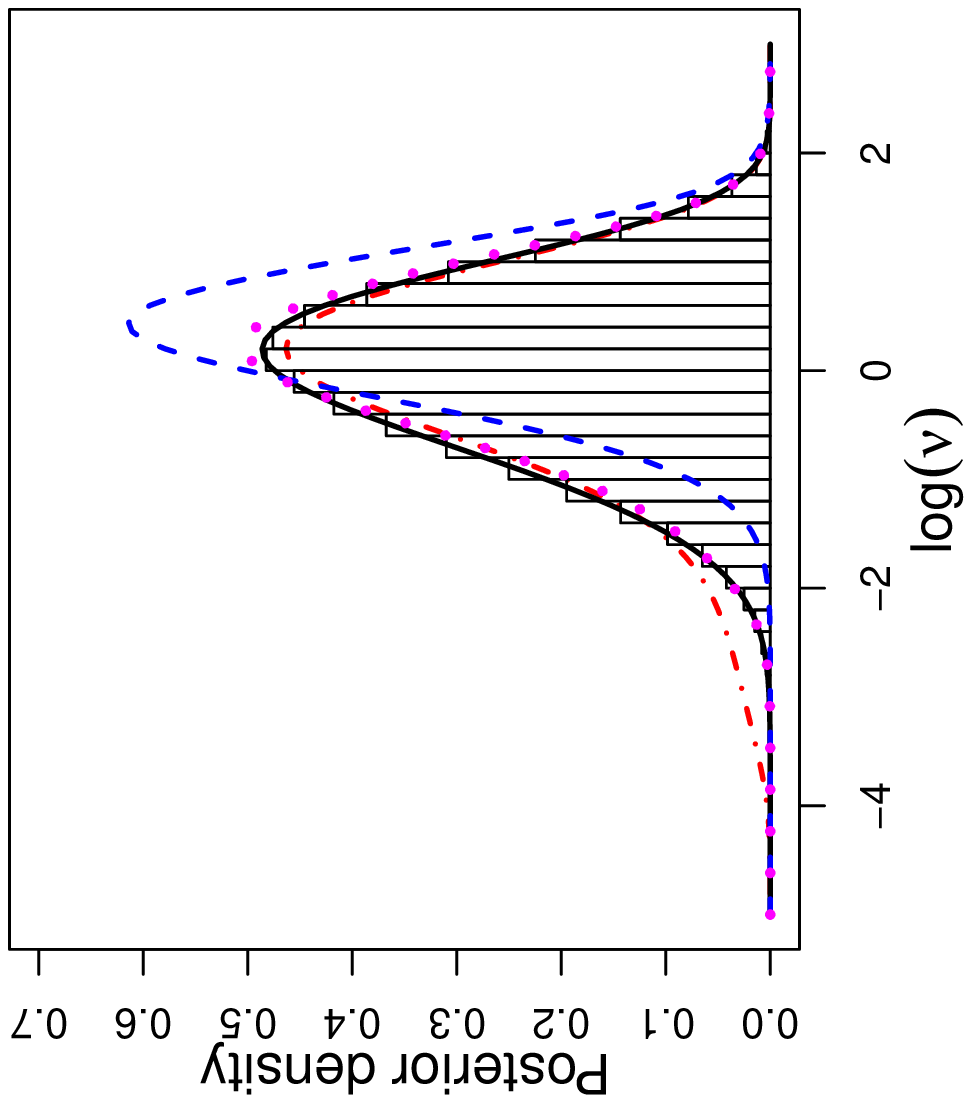}\\
\includegraphics[scale = 0.63, angle=-90]{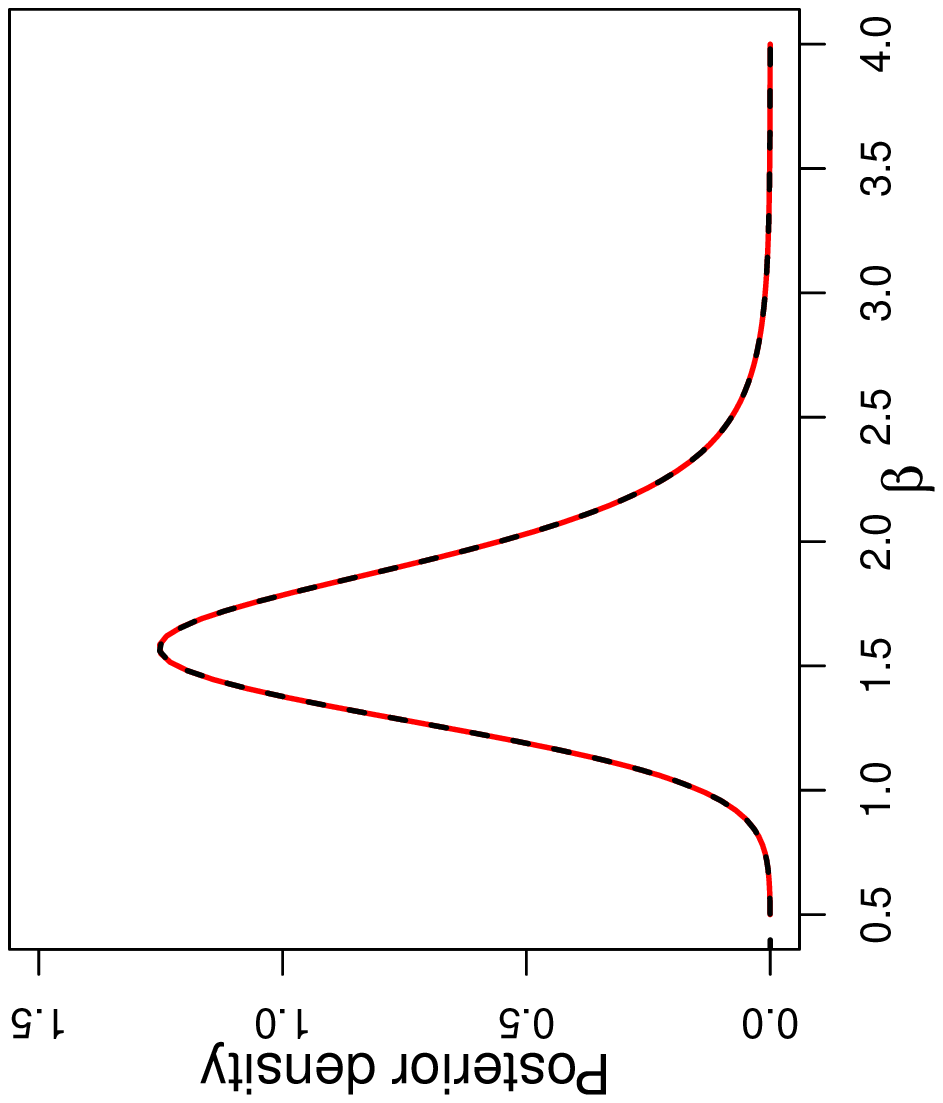}
\includegraphics[scale = 0.63, angle=-90]{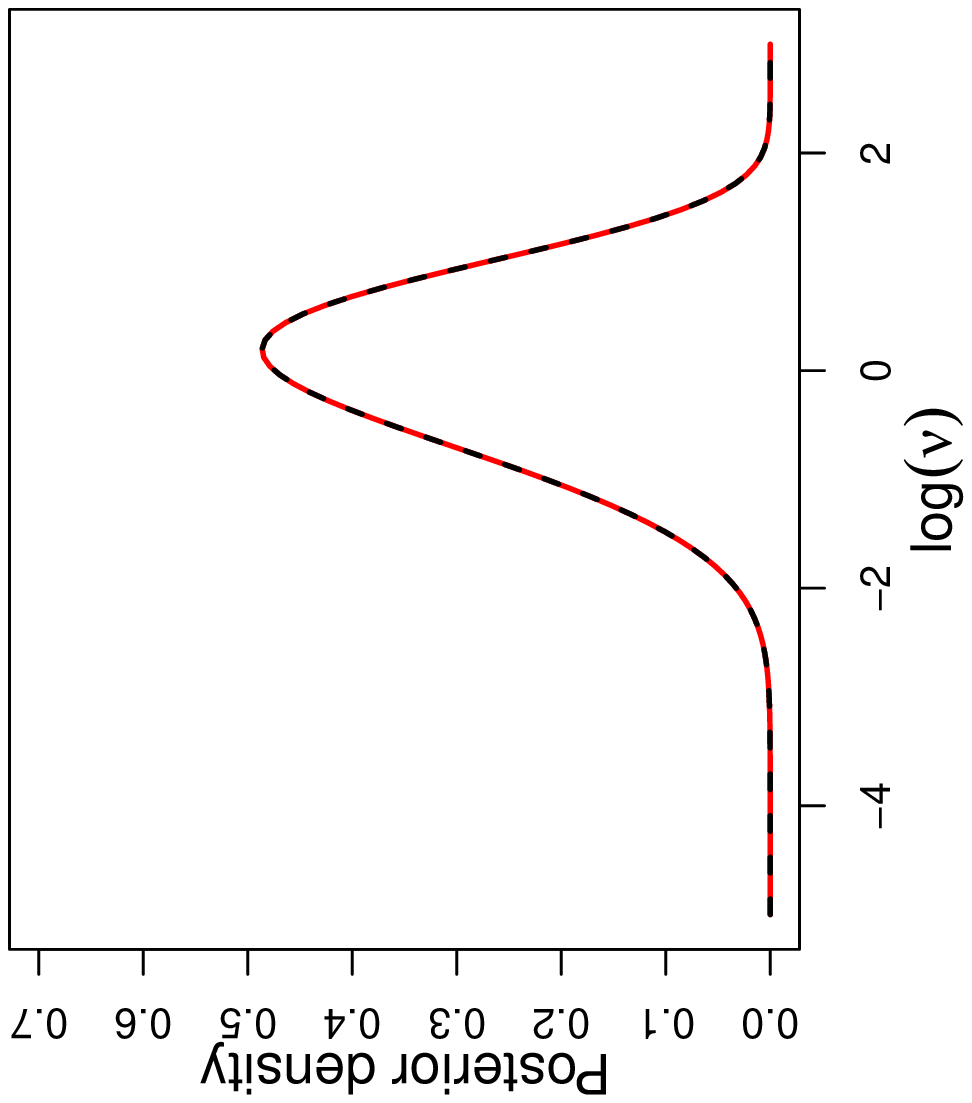}
\caption{First row: marginal posterior distributions for $\beta$ and $\log \nu$ approximated by: $10^6$ MCMC samples with \texttt{JAGS} (histogram), Laplace (dash-dotted red curves), improved Laplace (solid black), INLA (dashed blue) and improved INLA (dotted magenta) methods. Second row: marginal posteriors of $\beta$ and $\log \nu$ with the improved Laplace approximation with exact (black) and approximate conditional minima (red dashed).}
\label{fig:figINLA}
\end{figure}
As a gold standard we use MCMC as implemented in the \texttt{JAGS} software. We compare it with the improved and the standard Laplace approximations and with INLA and the improved ILNA of \cite{ferkingstad2015improving} in the first row of Figure~\ref{fig:figINLA}. As expected, the improved Laplace approximation is virtually indistinguishable from the MCMC approximation. On the other hand, the Laplace approximation and both versions of INLA perform slightly worse then the improved Laplace approximation.

The second row of Figure~\ref{fig:figINLA} shows the marginal posteriors of $\beta$ and $\log\nu$ approximated by the improved Laplace method using the approximate conditional minima introduced in Remark 4. In this case, the improved Laplace approximation with either exact (black continued) or approximate (red dashed) conditional minima gives indistinguishable results.

%

A small simulation study is performed in order to assess the accuracy of the proposed method. In particular, 
we consider 100 datasets with sample size $n = 100$ drawn from the model and under the same parameter values as before. For each dataset, we compute the marginal posteriors of $\theta$ by MCMC, here treated again as the gold standard, and by: the standard Laplace, the improved Lapalce (with approximate conditional minima) and the original and corrected versions of INLA.  The MCMC approximation is done by $10^6$ samples, after a burn-in of $10^5$, and thinning equal to 10. As a measure of discrepancy, we compute the Kullback-Leibler (KL) divergence between the MCMC posterior and the other approximation methods. The KL divergence is defined as

\[
KL(\pi;\tilde\pi) = \int_{\theta\in\Theta}\log\left\{\frac{\pi(\theta|y)}{\tilde\pi(\theta|y)}\right\}\pi(\theta|y)\,\dd\theta
\,,
\]
where $\pi(\theta|y)$ is the MCMC posterior and  $\tilde\pi(\theta|y)$ is the approximate posterior obtained with the other methods. For simplicity, we compute two marginal KL divergences, i.e. one for $\beta$ and one for $\nu$. The higher the KL, the  worse is the approximation $\tilde\pi(\cdot|y)$. The MCMC marginal posteriors are computed with logspline density estimation using the \texttt{logspline} package of \texttt{R}. This tends to give smoother density estimates than usual kernel density estimators. The marginal posterior distributions obtained with either standard or improved Laplace approximation are available analytically, whereas those based on the two versions of INLA are build through smoothing splines.

\begin{figure}
\includegraphics[scale = 0.63, angle=-90]{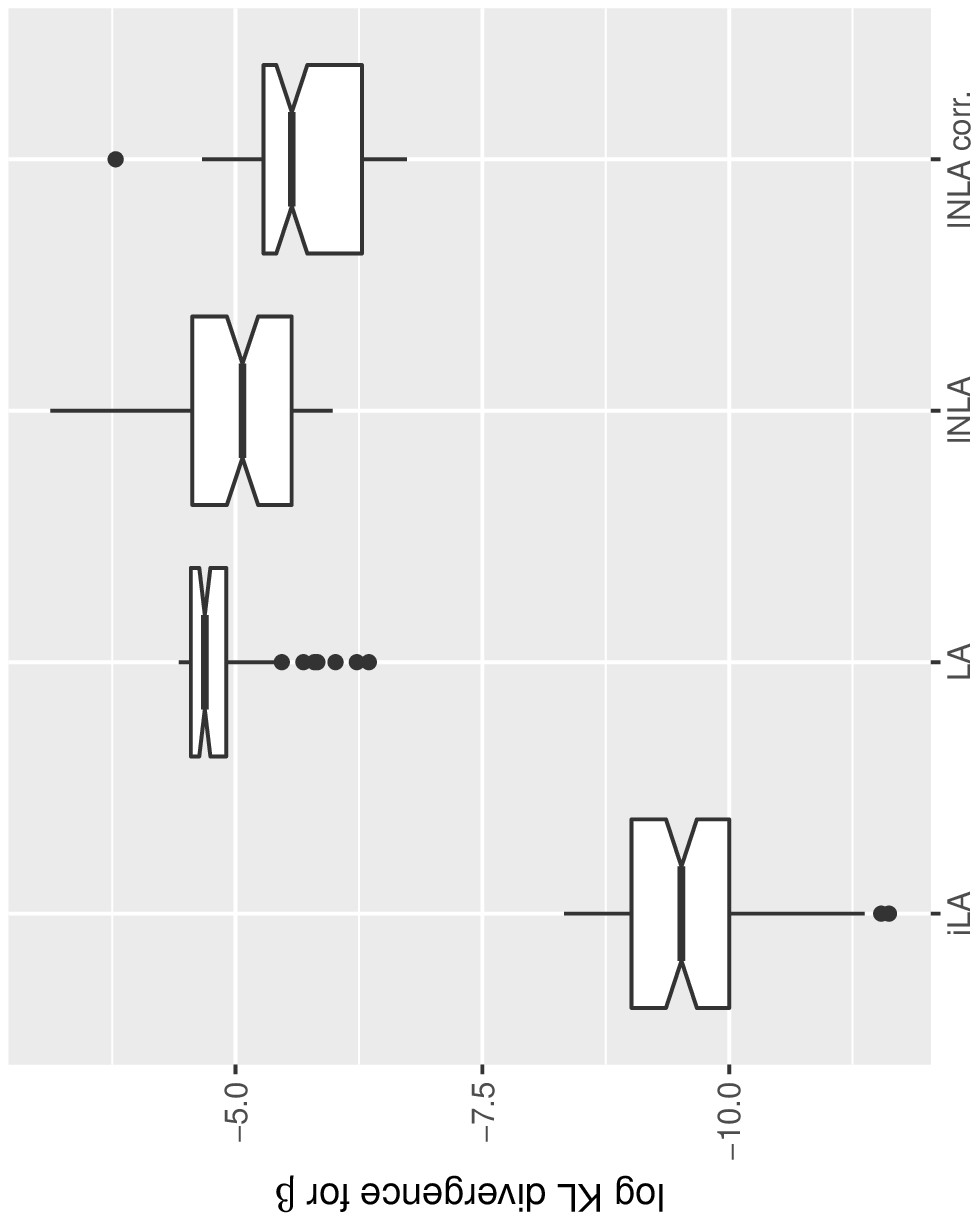}
\includegraphics[scale = 0.63, angle=-90]{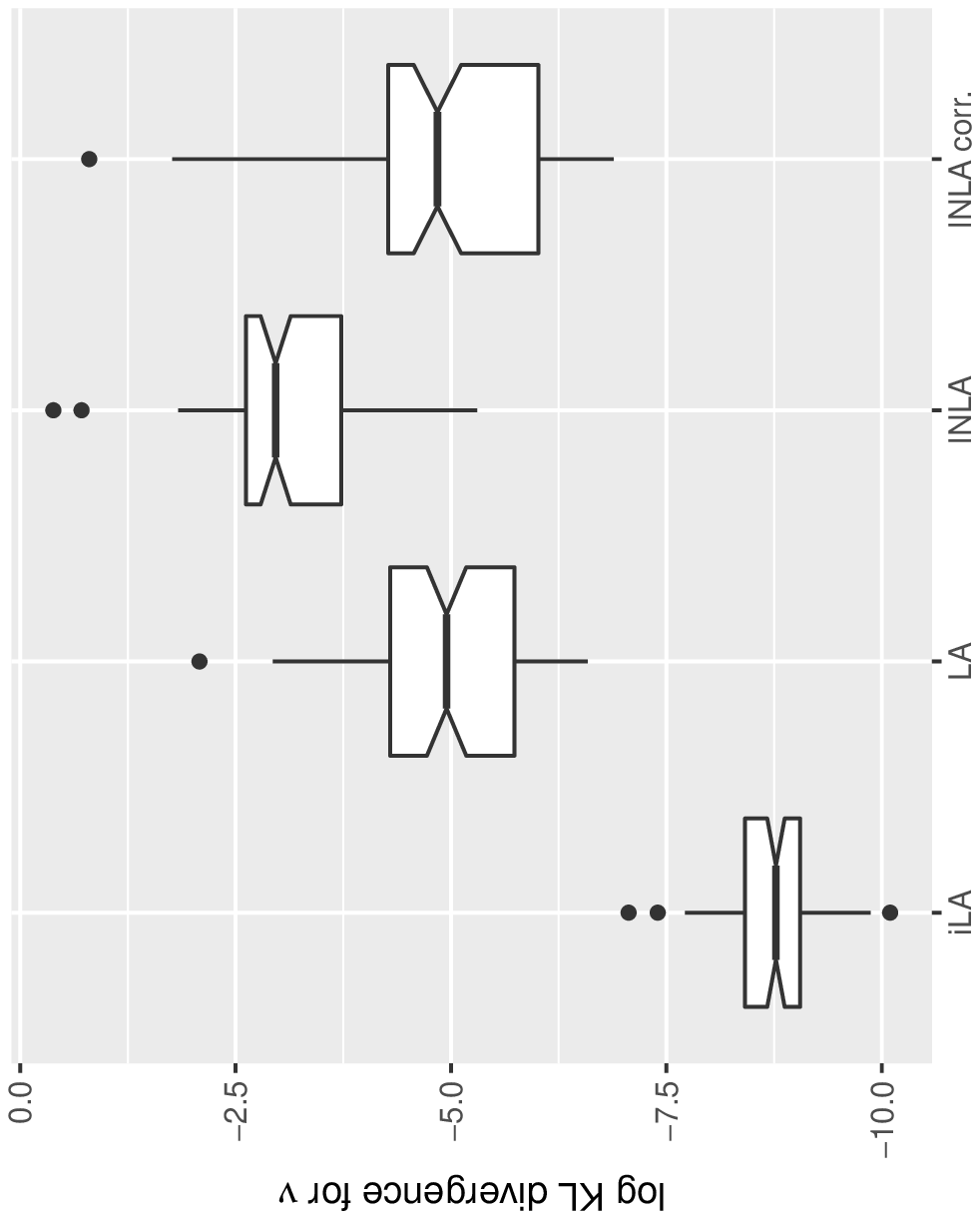}\\
\caption{Log-KL divergences of marginal MCMC posteriors of $\beta$ (left panel) and $\nu$ (right panel) from the corresponding ones approximated through the: Laplace (LA), improved Laplace (iLA), INLA and the corrected INLA methods in a repeated sampling context with 100 replications.}
\label{fig:figINLA2}
\end{figure}

The results in Figure~\ref{fig:figINLA2} highlight that the marginal posteriors of $\beta$ (left panel) and $\nu$ (right panel) approximated with the proposed method are the closest to the MCMC posteriors in terms of the KL divergence.

\subsection{Nonlinear regression}
\label{sec:nonlin}
Consider the nonlinear regression model
\[
y_i = \beta_1\exp\{1-\exp(-x_i/\beta_2)\} + \sigma\varepsilon_i,
\]
where $y_i$ is the response variable, $x_i$ is a covariate, $(\beta_1,\beta_2)$ are unknown regression parameters, and the $\epsilon_i$ are independent error terms, $i = 1, \ldots,n$. We focus on two possible distributions for the error term: the normal distribution and the Student's $t$-distribution, with unknown degrees of freedom $\tau$. The aim is to choose among them through the Bayes Factor (BF), which in our case is given by the ratio of the posterior normalising constant of the normal model over that of the Student's $t$ model.

As an example we consider the BOD2 dataset \citep[][p. 305]{bates1988nonlinear}, which concerns a study on biochemical oxygen concentration ($y$) as function of time ($x$). For both models, we assume the parameters are a priori independent. Moreover, a bivariate normal distribution with mean vector zero and scale matrix $10\mathbf{I}_2$, is assumed for $\beta$. Following the recommendations of \cite{gelman2006prior}, for the scale parameter $\sigma$ we assume a half-Cauchy prior with scale equal to 10. Finally, the Jeffreys rule prior proposed by \cite{fonseca2008objective} is taken for $\tau$. For numerical stability in the optimisations, $\sigma$ and $\tau$ are considered in logarithmic scale.

Table~\ref{tab:nlinreg} shows the log-marginal likelihoods and the Bayes factor approximated by the improved Laplace approximation and by: the standard Laplace, the Bartlett-corrected Laplace (\citealt{diciccio1997computing}), importance sampling, the method of \cite{chib2001} and adaptive numerical integration as implemented in the \texttt{R} package \texttt{cubature}. The \citeauthor{chib2001}'s, IS and Bartlett-corrected Laplace approximations are replicated 500 times, where for each replication the MCMC algorithm is started at a different point. The final estimates of the log-marginal likelihood and of the BF (in decimal logarithmic scale) are obtained by averaging the 500 replications. At each replication, the Bartlett-corrected Laplace approximation is performed with $2\times10^5$ final MCMC posterior draws, after suitable burn-in and thinning to reduce autocorrelation. The same MCMC posterior sample is used also for computing the marginal likelihood with \citeauthor{chib2001}'s method. For the IS approximation we consider $10^6$ draws from the multivariate Student's $t$-distribution with $m$ degrees of freedom, centred at the posterior mode, with scale matrix equal to $c>0$ times the inverse of the posterior Hessian at the modal value. Several values of $c$ around 1 and $m\in[3,50]$ were also considered. However, they gave very similar results, so we set $c = 1$ and $m = 3$. This choice permits to have an importance density with finite variance and with heavy tails (see, e.g., \citealp{evans2000}, Sect. 6.3). The standard deviation of the 500 log-marginal likelihoods divided by $\sqrt{500}$ is taken as a measure of Monte Carlo standard error.

\begin{table}[h!]
\centering\smaller
\begin{tabular}{ll c c c c c}
\hline
\hline
 Model & Adaptive    &Laplace & Improved & Bartlett-   & Chib \& & IS\\
       & integration &          & Laplace  & corrected (3$\times$SE$_{\text{MC}}$)  &   Jeliazkov (3$\times$SE$_{\text{MC}}$)   & (3$\times$SE$_{\text{MC}}$)\\
\hline
Normal   & -2.539  & -2.905 & -2.540 & -2.504 (0.0198) & -2.537 (0.003) & -2.539 (0.0001) \\
Student's $t$ & -2.488 & -5.179 & -2.449 & -4.188 (0.0199) & -2.478 (0.0031) & -2.457 (0.0001)\\
\hline
$\log_{10}$ BF & -0.022 & 0.988 & -0.039 & 0.731 (0.0103) & -0.027 (0.0057) & -0.036 (0.0183)\\
\multicolumn{2}{l}{(Normal vs $t$)}&&&&&\\
\hline
\hline
\end{tabular}
\caption{BOD2 data. Logarithm of Bayesian marginal likelihoods and BF computed with: adaptive numerical integration, standard Laplace, the improved Laplace, the Bartlett-corrected Laplace, importance sampling (IS) and \citeauthor{chib2001}'s method. For the last three methods, the point estimates are obtained by averaging over 500 replications; the quantity in parenthesis gives $3\times$SE$_{\text{MC}}$, where SE$_{\text{MC}}$ is the Monte Carlo standard error given by the standard deviation divided by $\sqrt{500}$.}\label{tab:nlinreg}
\end{table}

Results in Table~\ref{tab:nlinreg} indicate that the standard Laplace approximation and its Bartlett-corrected version are quite inaccurate, since both lead to substantial evidence in favor of the normal model (see \citealp{kass1995bayes} for the interpretation of the BF). Such an evidence is not confirmed by the BF approximated through adaptive numerical integration, here treated as the gold standard; neither IS and \citeauthor{chib2001}'s method confirm the aforementioned evidence. In addition, results of the improved Laplace method are in reasonable agreement with IS, \citeauthor{chib2001}'s approximation and adaptive numerical integration.

The inaccuracy of the standard Laplace approximation in the case of the Student's $t$ model is most likely due to the non normality of the marginal posterior of $(\log\sigma,\log\nu)$. Indeed, a look at the bivariate kernel density estimate of this marginal bivariate posterior (not reported here) reveals that it is banana-shaped, and therefore it is quite far from being elliptical. Nevertheless, such a shape is well accommodated by the improved Laplace approximation.

\subsection{GLMM with crossed random effects}
\label{sec:salam}
We consider the problem of approximating the marginal likelihood for the fixed parameters in a model with crossed random effects \citep{shun1995laplace,shun1997another}. Such a model is useful, for instance, when analysing the Salamander mating data \citep[][p. 439]{mccullagh1989glm}. These data have been analysed by \cite{karim1992}, \cite{shun1997another}, \cite{Booth1999wt}, \cite{bellio2005pairwise}, \cite{sung2007}, among others, and consist of three separate experiments, each performed according to the design given in \citet[][Table 14.3]{mccullagh1989glm}. Each experiment involved matings among salamanders in two closed groups. Both groups contained five species {R} females, five species {W} females, five species {R} males and five species {W} males. Within each group, only 60 of the possible 100 heterosexual crosses were observed owing to time constraints. Thus, each experiment resulted in 120 binary observations indicating which matings were successful and which were not.

As in \citet[][p. 441]{mccullagh1989glm}, the data are modelled as if different sets of 20 male and 20 female salamanders were used in each experiment. Let $y_{ij}$ be the indicator of a successful mating between female $i$ and male $j$, for $i, j=1,\ldots,60$, where only 360 of the $(i,j)$ pairs are relevant. Let $u_i^f$ denote the random effect that the $i$th female salamander has across matings in which she is involved, and define $u_j^m$ similarly for the $j$th male. The data $y_{ij}$ are assumed conditionally independent with
\[
Y_{ij}|u_i^f,u_j^m\sim \mathrm{Bernoulli}(p_{ij}),
\]
\[
\text{logit}(p_{ij}) = x_{ij}^\T\beta + u_{i}^f + u_j^m,
\]
where $x_{ij}$ is a 4-dimensional row vector of zeros and ones indicating the type of cross, $\beta$ is the vector of fixed effects, $U_{i}^f\sim N(0,\sigma_f^2)$ and $U_{j}^m\sim N(0,\sigma_m^2)$. 

As a first example we consider the estimation of $\beta$ and $(\sigma_f^2, \sigma_m^2)$ for each separate experiment -- following the same model structure as in \cite{shun1997another} -- performed by maximising the approximate marginal likelihood. The aim is to compare the maximum likelihood estimate (MLE) based on the marginal likelihood approximated by the improved Laplace method with those based on the modified Laplace approximation proposed by \cite{shun1995laplace} and \cite{shun1997another}. It is well known that in models with crossed random effects the standard Laplace approximation is not asymptotically valid \citep{shun1995laplace}, and it may give poor results.

Let $\beta = (\beta_0, \beta_{\text{WS}_{f}}, \beta_{\text{WS}_{m}},\beta_{\text{WS}_{f}\times \text{WS}_{m}})$ be the fixed effects, where $\beta_{0}$ is a constant, $\beta_{\text{WS}_{f}}$ is the effect of the dummy variable $\text{WS}_{f}$ which takes one if the observation is from a species W female and zero otherwise, and so on. The marginal likelihood has the form \eqref{eq:mlGLMM}, with $\theta = \beta$ and $\theta_u = (\sigma_f^2, \sigma_m^2)$ and
involves a 40 dimensional integral that cannot be reduced to a product of lower dimensional integrals, even though the random effects $u_f=(u^f_1,\ldots,u^f_{20})$ and $u_m=(u_1^m,\ldots,u_{20}^m)$ have independent normal distributions. The approximate MLE for the three separate experiments (reported in Tab~\ref{tab:salamander}) are compared with those of \citet[][Tab. 2 and Tab. 3]{shun1997another}.

\begin{table}
\small
 \begin{minipage}[t]{0.5\linewidth}
 \begin{tabular}{lcccccccc}
 \hline
 \hline
 &\multicolumn{6}{c}{Approximate MLE} &  & \\
 \cline{2-9}
Methods & $\beta_0$ & $\beta_{WS_{f}}$ & $\beta_{WS_{m}}$ & $\beta_{WS_{f}\times WS_{m}}$ & $\sigma_f^2$ & $\sigma_m^2$ & Sec. & N. of. Iter.\\
\hline
\multicolumn{7}{l}{Laplace:}\\
\hspace{0.5cm}Exper. 1 & 1.34 & -2.94 & -0.42 & 3.18 & 1.58 & 0.073 &  0.92 & 22\\
\hspace{0.5cm}Exper. 2 & 0.57 & -2.46 & -0.77 & 3.71 & 1.81 & 0.92 & 0.72 & 15 \\
\hspace{0.5cm}Exper. 3 & 1.02 & -3.23 & -0.82 & 3.82 & 0.35 & 1.85 &1.05 & 22 \\
\multicolumn{7}{l}{Modified Laplace of \cite{shun1997another}\footnote{Corrected(1) values taken from \cite{shun1997another}}:} & & \\
\hspace{0.5cm}Exper. 1 & 1.37 & -3.02 & -0.44 & 3.27 & 1.72 & 0.185 & $-$& $-$ \\
\hspace{0.5cm}Exper. 2 & 0.57 & -2.53 & -0.77 & 3.79 & 2.10 & 1.10 & $-$& $-$ \\
\hspace{0.5cm}Exper. 3 & 1.04 & -3.31 & -0.83 & 3.90 & 0.46 & 2.07 & $-$& $-$ \\
\multicolumn{7}{l}{Improved Laplace:} & & \\
\hspace{0.5cm}Exper. 1 & 1.37 & -3.02 & -0.44 & 3.27 & 1.74 & 0.189 & 397 & 29\\
\hspace{0.5cm}Exper. 2 & 0.56 & -2.55 & -0.79 & 3.77 & 2.12 & 1.14 & 209 & 17\\
\hspace{0.5cm}Exper. 3 & 1.03 & -3.30 & -0.82 & 3.90 & 0.49 & 2.12 &  145 & 11 \\
\multicolumn{7}{l}{Improved Laplace with approximate conditional minima:} & &\\
\hspace{0.5cm}Exper. 1 & 1.36 & -2.99 & -0.44 & 3.24 & 1.72 & 0.15 & 56 & 15\\
\hspace{0.5cm}Exper. 2 & 0.56 & -2.49 & -0.75 & 3.72 & 2.07 & 1.05 & 64 & 16\\
\hspace{0.5cm}Exper. 3 & 1.02 & -3.27 & -0.82 & 3.87 & 0.43 & 2.03 & 83 & 20\\
 \hline
 \hline
 \end{tabular}
 \end{minipage}
 \caption{Salamander data. Comparison of the improved Laplace method (with exact and approximate conditional minima) with the standard Laplace and the modified Laplace approximation of \cite{shun1997another}. ``Sec." refers to the elapsed time in seconds and ``N. of Iter." refers to the number of iterations before convergence.}
 \label{tab:salamander}
 \end{table}
 
 From Table~\ref{tab:salamander} we notice that the standard Laplace method is very fast but the resulting MLE are quite inaccurate as far as variance components parameters are concerned. On the other hand, approximate MLE obtained with the improved Laplace method, with either exact or approximate conditional minima, are closer to those based on the modified Laplace approximation of \cite{shun1997another}. However, the improved Laplace method is easier to compute since it does not require derivatives of the negative log-integrand beyond the second-order. In terms of computing time, the improved Laplace approximation with approximate conditional minima is much faster than the version with exact conditional minima, though slower than the standard Laplace approximation. 

Consider now the joint analysis of the Salamander data, by independently combining the three experiments' data. In this case the marginal likelihood entails the computation of three 40-dimensional integrals. To compare our method with other results available in the literature, we consider a slightly modified version of the fixed effects. In particular, here $\beta$ is equal to $(\beta_\text{R/R}, \beta_\text{R/W}, \beta_\text{W/R}, \beta_\text{W/W})$, where $\beta_\text{R/R}$ denotes the effect of the cross between a species R female and a species R male, and so on. 
 \begin{table}
  \begin{minipage}[t]{0.5\linewidth}
  \begin{tabular}{lcccccc}
  \hline
  \hline
  &\multicolumn{6}{c}{Approximate MLE}\\
  \cline{2-7}
 Methods & $\beta_0$ & $\beta_{WS_{f}}$ & $\beta_{WS_{m}}$ & $\beta_{WS_{f}\times WS_{m}}$ & $\sigma_f^2$ & $\sigma_m^2$ \\
 \hline
 Laplace & 1.01 & 0.31 & -1.90  & 0.99 & 1.17 & 1.04\\
 Improved Laplace     & 1.02 & 0.32 & -1.95  & 1.00 & 1.39 & 1.25\\
 Improved Laplace (approx. cond. min)     & 1.01 & 0.31 & -1.92  & 0.98 & 1.34 & 1.19\\
 MC-EM \citep{Booth1999wt} & 1.03 & 0.32 & -1.95 & 0.99 & 1.40 & 1.25\\
 Gibbs \citep{karim1992} & 1.03 & 0.34 & -1.98 & 1.07 & 1.50 & 1.36\\
 PQL\footnote{From \cite{Booth1999wt}}   & 0.87  & 0.28 & -1.69 & 0.95& 1.35 & 0.93\\
  \hline
  \hline
  \end{tabular}
  \end{minipage}
  \caption{Salamander mating data analysed jointly. Comparisons of the improved Laplace approximation (with either exact or approximate conditional minima) with the Monte Carlo Expectation-Maximisation (MC-EM) of \cite{Booth1999wt}, the Gibbs sampling of \cite{karim1992}, the quasi-likelihood (PQL) approach of \cite{breslow1993} and the standard Laplace approximation.}
  \label{tab:salamandercont}
  \end{table}
  
The approximate MLE obtained from the improved Laplace approximation with either exact or approximate conditional minima, the Monte Carlo Expectation-Maximisation (MC-EM) algorithm of \cite{Booth1999wt}, the quasi-likelihood approach of \cite{breslow1993}, the standard Laplace approximation and the posterior mean taken with the Gibbs sampling proposed by \cite{karim1992} are illustrated in Table~\ref{tab:salamandercont}. The standard Laplace approximation underestimates the variance parameters \citep[see also][]{shun1997another}. The estimate of $\beta$ and that of the variance parameters based on both versions of the improved Laplace approximation are quite similar to those of the MC-EM procedure of \cite{Booth1999wt} \citep[see also][]{sung2007}. However, compared to MC-EM, the proposed method does not require tuning from the practitioner. 

We notice that the approximate MLE based on the improved Laplace method with exact conditional minima is found within 9.5 minutes and after 14 iterations. Using approximate conditional minima, the approximate MLE is located within  3 minutes and after 15 iterations. 

We can use \eqref{eq:mlGLMM} also for conducting full likelihood-based inference. For instance, in the case of the Salamander data analysed jointly, let us consider profile likelihood-based confidence intervals for $\sigma_f^2$ and $\sigma_m^2$. Figure~\ref{fig:4} depicts the aforementioned relative profile likelihoods, obtained with the standard Laplace and with the improved Laplace with approximate conditional minima.
\begin{figure}
\centering
\includegraphics[scale = 0.63, angle=-90]{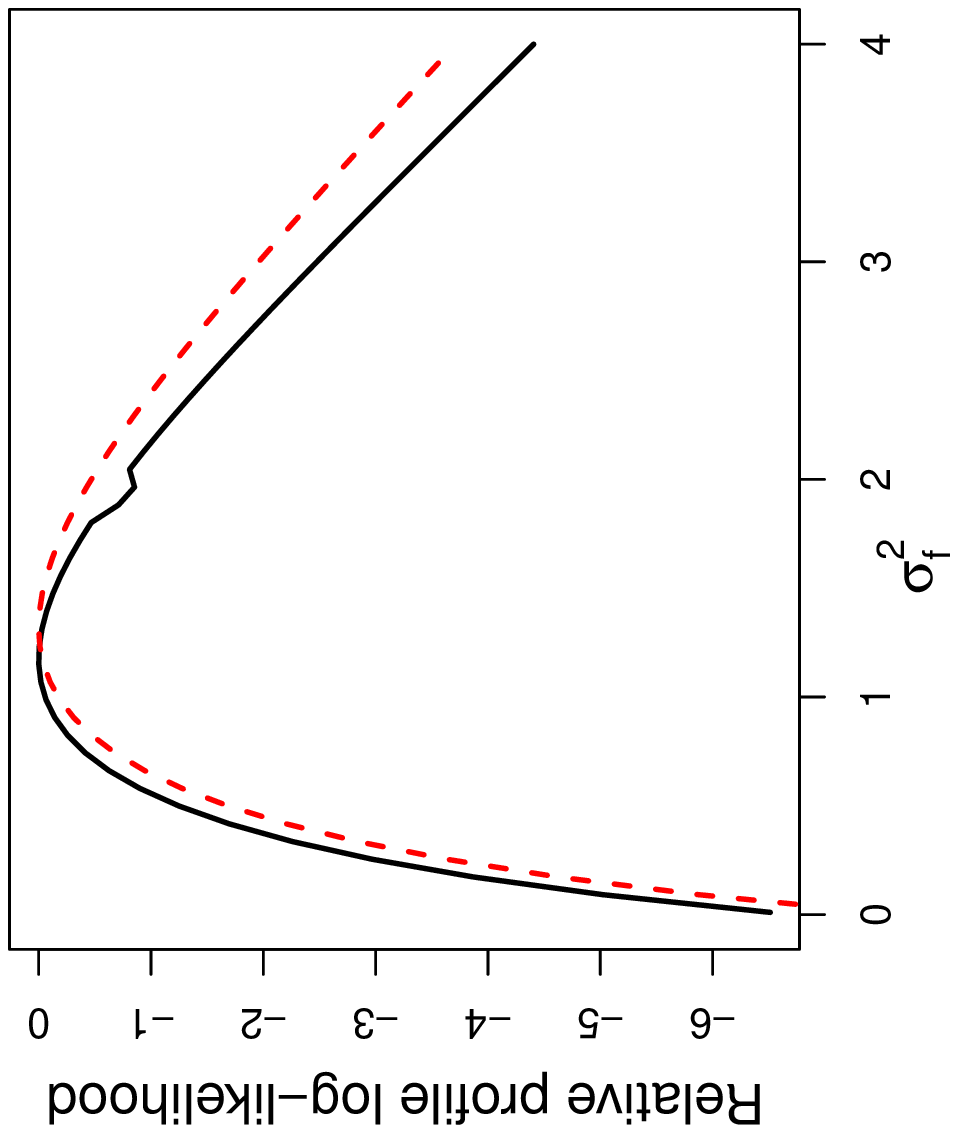}
\includegraphics[scale = 0.63, angle=-90]{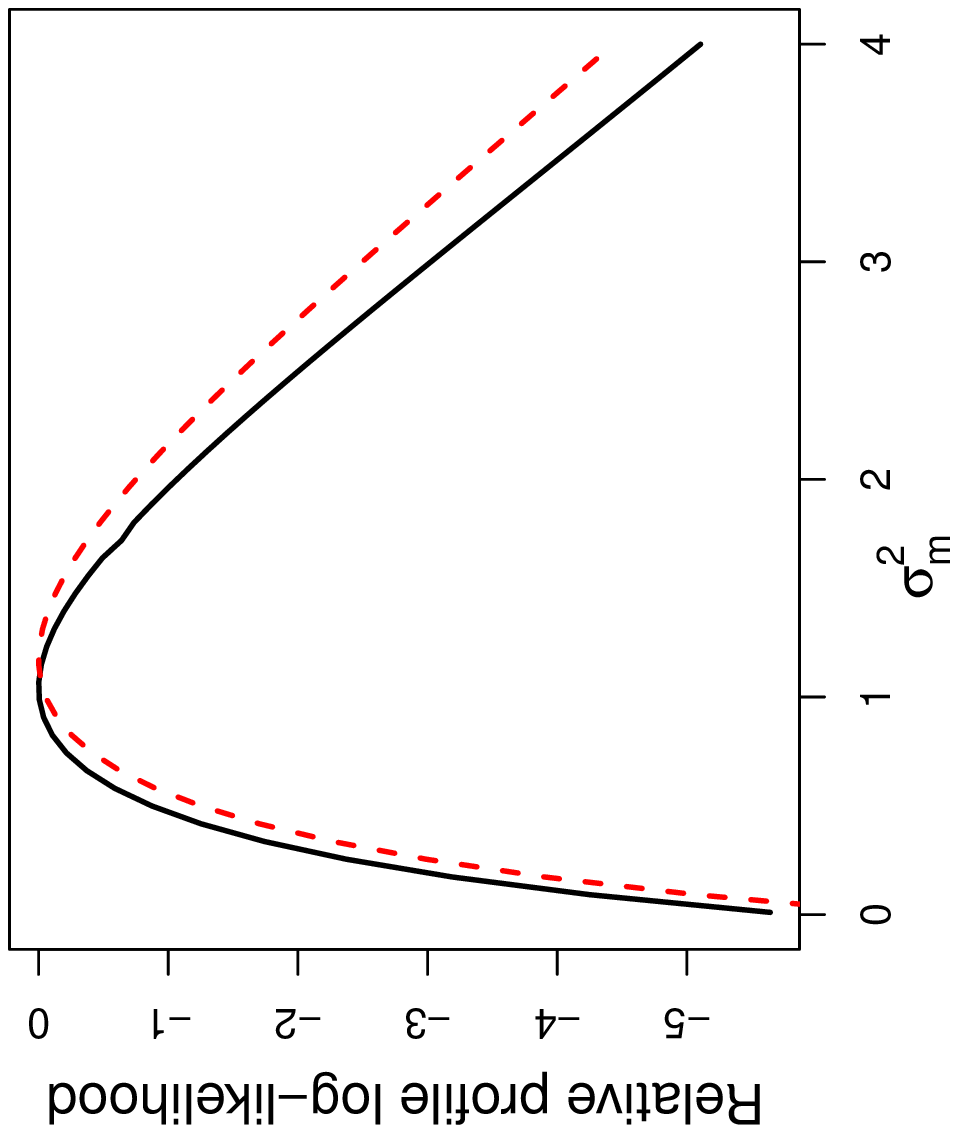}
\caption{Salamander mating data analysed jointly. Approximate relative profile log-likelihoods for the variance components obtained with the standard Laplace approximation (black continued) and with the improved Laplace approximation with approximate conditional minima (red dashed).}
\label{fig:4}
\end{figure}
From this plot we notice that standard Laplace-based profile likelihoods for the variance parameters are narrower than those based on the improved Laplace approximation. For instance, the 0.95 confidence interval found by inverting the profile likelihood of $\sigma_f^2$ and $\sigma_m^2$ with the standard Laplace approximation are (0.38, 2.7) and (0.31, 2.46) while those based on the improved Laplace approximation are (0.46, 2.98) and (0.38, 2.69).

\subsection{GLMM with spatial random effects}
\label{sec:rongelap}
We consider the application a Poisson geostatistical model to the Rongelap dataset \citep{diggle1998model}. This dataset reports counts $y_i$ on radionuclide concentration over the length of time $t_i$, at the spatial location $s_i$, for  $i=1,\ldots,157$ different locations in Rongelap Island. On the basis of the theory of radioactive emissions, the count $Y_i$ at the $n$ locations can be treated approximately as realisations of independent random variables with mean $\lambda(s_i) = t_i\eta(s_i)$, where $\eta(s_i)$ measures the radioactivity at location $s_i$, $i=1,\ldots,n$. See \cite{diggle1998model} and reference therein for further details.

For these data, \cite{diggle1998model} propose the following geostatistical model

\begin{eqnarray}
Y_i\,\vert\,\beta_0,\,\Sigma\,, u(s_i) &\sim\,&	\text{Poisson}(t_i\eta(s_i)),\label{eq:glmm_sp}
\end{eqnarray}

\[
\log\eta(s_i) = \beta_0 + u(s_i)
\]
where, $(U(s_1),\ldots,U(s_n)\, \sim\, N_n(0,\Sigma)$, are spatial random effects, which are marginally normally distributed with mean zero and covariance matrix $\Sigma$. Typically, estimation of a full $\Sigma$ is not possible, unless we place a proper prior on it, and some structure has to be imposed on it. Here we assume the exponential model, which implies that

\[
\Sigma_{ij} = \text{cov}(U(s_i),U(s_j) = \sigma^2\exp\{-||s_i-s_j||_2/\alpha\},
\]
where $\sigma^2$ is a variance parameter, $\alpha$ controls the correlation function. In our computations the distance matrix of the locations is divided by 100 in order to avoid numerical overflow problems in the computation of $\Sigma$.

The aim is to fit model \eqref{eq:glmm_sp} to the Rongelap data by maximum likelihood estimation, where the parameter of interest is $(\beta_0, \sigma^2, \alpha)$. The marginal likelihood of $(\beta_0, \sigma^2, \alpha)$ can be recast in the form of (2), with $f_u(\cdot;\theta_u)$ given by the multivariate normal distribution with mean zero and the covariance matrix controlled by $\theta_u=(\sigma^2,\alpha)$ and with $L(\theta;u,y)$ the likelihood given by the conditional distribution of $y$ given $u$ and  $\theta=\beta_0$.

To approximate the marginal likelihood we use the Laplace method and the improved Laplace approximation with approximate conditional minima. For comparison purposes, we approximate (2) also by importance sampling as proposed by \cite{sung2007}. Comparison with INLA in this case is not possible as the \texttt{R-INLA} package provides only numerical approximations to the marginal posterior distributions and not to the full marginal likelihood (2). We use the multivariate Student's $t$-distribution as importance density. The location of the importance density is fixed at the mode of the conditional density of $u$ given $y$ and $(\beta_0,\sigma^2,\alpha)$ and the scale matrix of the importance density is fixed at the Hessian matrix of the negative logarithm of the conditional density of $u$ given $y$ and $(\beta_0,\sigma^2,\alpha)$. The IS approximation of (2) for a fixed $(\beta_0,\sigma^2,\alpha)$ is 

\[
L_{IS}(\beta_0,\sigma^2,\alpha;y) = m^{-1}\sum_{j=1}^m 	\frac{L(\beta_0;u(s)_j,y)f_{u(s)}(u(s)_j; \Sigma)}{\tilde f(u(s)_j)}\,,
\]
where $u(s)_j$ is the $j$th random vector drawn from the importance distribution $\tilde f(\cdot)$, for $j=1,\ldots,m$ and $m$ is the overall number of random draws. To have an importance density with heavy tails we fix the degrees of freedom to 5. Furthermore, we fix the random seed in order to obtain a smooth approximation for the likelihood function. An issue with the Monte Carlo approximation is that the resulting estimate is subject to stochastic variability. To take this into account we consider $m=2\times10^4$ draws and compute the approximate MLE at 50 different seeds. Increasing $m$ would give more stable results but at the cost of higher computing time. The final estimates are obtained by averaging the 50 approximate MLEs and the Monte Carlo standard error is also computed from these replications.

 \begin{table}
 \centering
  \begin{tabular}{lccc}
  \hline
  \hline
  &\multicolumn{3}{c}{Methods}\\
  \cline{2-4}
 & Laplace & Improved  & Importance   \\
  Approximate MLE &  &  Laplace  & Sampling (3$\times$ $\text{SE}_{\text{MC}}$)  \\
 \hline
 $\beta_0$ & 1.83 & 1.83 & 1.83 (3.6$\times 10^{-6}$) \\
 $\sigma^2$    & 0.224  & 0.302 & 0.296 (2.2$\times 10^{-4}$)\\
  $\alpha$ &  0.081 & 0.104 & 0.103 (4.5$\times 10^{-5}$)\\
  \hline
  \hline
  \end{tabular}
  \caption{Rongelap data. Improved Laplace approximation (with approximate conditional minima) the standard Laplace approximation and importance sampling estimates. For the IS method, the point estimates are obtained by averaging over 50 replications obtained with 50 different random seeds; the quantity in parenthesis gives $3\times$SE$_{\text{MC}}$, where SE$_{\text{MC}}$ is the Monte Carlo standard error given by the standard deviation divided by $\sqrt{50}$.}
  \label{tab:rongelap}
  \end{table}

Results, shown in Table~\ref{tab:rongelap}, highlight that the standard Laplace approximation tend to underestimate both $\sigma^2$ and $\alpha$ as compared to the IS approximation, here treated as more trustworthy. On the other hand, the improved Laplace approximation gives similar results to the IS approximation. 
\begin{figure}
\centering
\includegraphics[width=8cm, height=8cm, angle=-90]{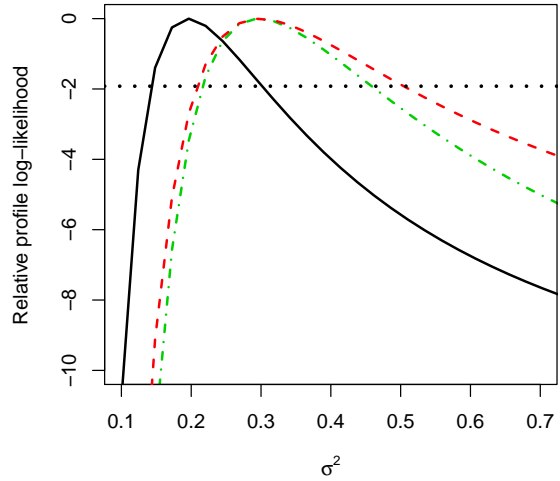}
\caption{Rongelap data. Approximate relative profile log-likelihoods for $\sigma^2$ obtained with the standard Laplace approximation (black continued), with the improved Laplace approximation with approximate conditional minima (red dashed) and with IS (green dot-dashed). The horizontal dotted line is the level which gives the 0.95 confidence interval for $\sigma^2$ based on the log-profile likelihood ratio statistic.}
\label{fig:5}
\end{figure}
Such a behaviour is perhaps more clear-cut if we look at the relative profile log-likelihood function of $\sigma^2$, depicted in Figure~\ref{fig:5} (the plots for $\alpha$ are similar and are omitted). This figure highlights that the standard Laplace approximation may produce misleading frequentist inference on the variance parameters, in terms of both, point estimation and profile likelihood-based interval estimation.

\section{Discussion} 

Although largely improving over the standard Laplace approximation, the proposed method is guaranteed to work only if the integrand is unimodal, with the mode being inside the domain of integration. This is because, if $h(\cdot)$ has either multiple minima or the minimum is not inside the domain of integration, then the determinants of blocks of its Hessian matrix may not be positive definite and the computation of \eqref{eq:iLAmarg} and \eqref{eq:condfree} may break down. A possible way to deal with multimodal integrands is through a mixture of Laplace approximations, e.g. one Laplace approximation for each of mode, provided they can all be found. However, these issues are open problems and are left for future work. 

A convenient feature of the proposed method is that the $d$ integrals can be easily computed in parallel. The numerical re-normalisations are an additional and difficult-to-quantify source of error. However, scalar numerical integration via carefully chosen adaptive quadratures is in general extremely accurate. 

The main computational burden of the method is due to conditional minimisations and Hessian determinants. Both can be greatly simplified by considering analytical first and second-order derivatives of $h(\cdot)$. This is the strategy adopted in the \texttt{iLaplace} package and throughout the examples. An alternative to analytical differentiation is the automatic differentiation, which provides on-line function differentiation during its evaluation \citep[see, e.g.,][]{david2012ad}. However, since automatic differentiation requires further programming efforts, we have not tried it in our package, though we plan to explore this possibility in future versions. 

The version of the method with approximate conditional minima showed good performance and significant savings in terms of computing time in the examples considered. Another alternative to this could be to use approximate conditional minima as starting points for the computation of the exact ones. Although this would speed-up the computation of conditional minima, the method might not be as fast as when using approximate conditional minima in place of the actual one. 

From a practical perspective, the improved Laplace approximation requires the integrand to be concave and unimodal but not necessarily symmetric or with Gaussian tails, though further assumptions are required to guarantee its asymptotic properties. In our experience, the standard Laplace approximation tends to work poorly when many variables of the integrand lay on the positive subset of the real numbers or when the dimensionality of the integrand increases with the sample size. Indeed, despite applying logarithmic transformations, such variables may still lead to asymmetric or heavy-tailed integrands. While in Bayesian applications $H(\cdot)$ may not always be unimodal, in GLMM it is often unimodal. In many instances, with independent random effects, the standard Laplace approximation or numerical integration with a few quadrature points are accurate enough for practical purposes. Indeed standard GLMM can now be fitted quite accurately by available \texttt{R} packages such as \texttt{lme4}. However, in models with complicated, dependent and/or crossed random effects, Laplace's method may perform poorly, and numerical integration may require a large number of quadrature points, hence leading to a higher computational overhead. Our method seems particularly suited for these contexts, as was also demonstrated by the examples of {Sections~\ref{sec:salam} and \ref{sec:rongelap}}.

Finally, the improved Laplace approximation can be used to compute slightly modified KL divergences that arise in the variational approximation framework \citep[see, e.g.,][]{ormerod2010explaining}. In this context, the method can be useful for extending the usual Gaussian variational approach to the use of more flexible and non-conjugate densities, such as the skew-$t$ \citep{azzalini2003distributions}. This and the extension of the method to cases in which the mode lies outside the integration region are under investigation.

\bibliographystyle{biometrika}
\bibliography{biblio}

\begin{thebibliography}{45}
\expandafter\ifx\csname natexlab\endcsname\relax\def\natexlab#1{#1}\fi

\bibitem[{Azzalini \& Capitanio(2003)}]{azzalini2003distributions}
\textsc{Azzalini, A.} \& \textsc{Capitanio, A.} (2003).
\newblock Distributions generated by perturbation of symmetry with emphasis on
  a multivariate skew t-distribution.
\newblock \textit{Journal of the Royal Statistical Society: Series B}
  \textbf{65}, 367--389.

\bibitem[{Barndorff-Nielsen \& Cox(1994)}]{cox1994inference}
\textsc{Barndorff-Nielsen, O.~E.} \& \textsc{Cox, D.~R.} (1994).
\newblock \textit{Inference and Asymptotics}.
\newblock Boca Ranton, Florida: Chapman \& Hall/CRC.

\bibitem[{Bates \& Watts(1988)}]{bates1988nonlinear}
\textsc{Bates, D.~M.} \& \textsc{Watts, D.~G.} (1988).
\newblock \textit{Nonlinear Regression Analysis and Its Applications}.
\newblock New York: Wiley Online Library.

\bibitem[{Bellio \& Varin(2005)}]{bellio2005pairwise}
\textsc{Bellio, R.} \& \textsc{Varin, C.} (2005).
\newblock A pairwise likelihood approach to generalized linear models with
  crossed random effects.
\newblock \textit{Statistical Modelling} \textbf{5}, 217--227.

\bibitem[{Bleistein \& Handelsman(1986)}]{bleistein1986}
\textsc{Bleistein, N.} \& \textsc{Handelsman, R.} (1986).
\newblock \textit{Asymptotic Expansions of Integrals}.
\newblock New York: Dover.

\bibitem[{Booth \& Hobert(1999)}]{Booth1999wt}
\textsc{Booth, J.~G.} \& \textsc{Hobert, J.~P.} (1999).
\newblock {Maximizing generalized linear mixed model likelihoods with an
  automated Monte Carlo EM algorithm}.
\newblock \textit{Journal of the Royal Statistical Society: Series B}
  \textbf{61}, 265--285.

\bibitem[{Breslow \& Clayton(1993)}]{breslow1993}
\textsc{Breslow, N.~E.} \& \textsc{Clayton, D.~G.} (1993).
\newblock Approximate inference in generalized linear mixed models.
\newblock \textit{Journal of the American Statistical Association} \textbf{88},
  9--25.

\bibitem[{Butler \& Wood(2002)}]{butler2002laplace}
\textsc{Butler, R.~W.} \& \textsc{Wood, A.~T.~A.} (2002).
\newblock Laplace approximations for hypergeometric functions with matrix
  argument.
\newblock \textit{The Annals of Statistics} \textbf{30}, 1155--1177.

\bibitem[{Chib(1995)}]{chib1995}
\textsc{Chib, S.} (1995).
\newblock Marginal likelihood from the {Gibbs} output.
\newblock \textit{Journal of the American Statistical Association} \textbf{90},
  1313--1321.

\bibitem[{Chib \& Jeliazkov(2001)}]{chib2001}
\textsc{Chib, S.} \& \textsc{Jeliazkov, I.} (2001).
\newblock Marginal likelihood from the {Metropolis--Hastings} output.
\newblock \textit{Journal of the American Statistical Association} \textbf{96},
  270--281.

\bibitem[{Cox \& Wermuth(1990)}]{cox1990approximation}
\textsc{Cox, D.~R.} \& \textsc{Wermuth, N.} (1990).
\newblock An approximation to maximum likelihood estimates in reduced models.
\newblock \textit{Biometrika} \textbf{77}, 747--761.

\bibitem[{Davison \emph{et~al.}(2006)Davison, Fraser \&
  Reid}]{davison2006improved}
\textsc{Davison, A.~C.}, \textsc{Fraser, D. A.~S.} \& \textsc{Reid, N.} (2006).
\newblock Improved likelihood inference for discrete data.
\newblock \textit{Journal of the Royal Statistical Society: Series B}
  \textbf{68}, 495--508.

\bibitem[{DiCiccio \emph{et~al.}(1997)DiCiccio, Kass, Raftery \&
  Wasserman}]{diciccio1997computing}
\textsc{DiCiccio, T.~J.}, \textsc{Kass, R.~E.}, \textsc{Raftery, A.} \&
  \textsc{Wasserman, L.} (1997).
\newblock Computing {Bayes} factors by combining simulation and asymptotic
  approximations.
\newblock \textit{Journal of the American Statistical Association} \textbf{92},
  903--915.

\bibitem[{Diciccio \& Young(2008)}]{diciccio2008conditional}
\textsc{Diciccio, T.~J.} \& \textsc{Young, G.~A.} (2008).
\newblock Conditional properties of unconditional parametric bootstrap
  procedures for inference in exponential families.
\newblock \textit{Biometrika} \textbf{95}, 747--758.

\bibitem[{Diggle \emph{et~al.}(1998)Diggle, Tawn \& Moyeed}]{diggle1998model}
\textsc{Diggle, P.~J.}, \textsc{Tawn, J.~A.} \& \textsc{Moyeed, R.~A.} (1998).
\newblock Model-based geostatistics.
\newblock \textit{Journal of the Royal Statistical Society: Series C}
  \textbf{47}, 299--350.

\bibitem[{Evans \& Swartz(2000)}]{evans2000}
\textsc{Evans, M.} \& \textsc{Swartz, T.} (2000).
\newblock \textit{Approximating Integrals via Monte Carlo and Deterministic
  Methods}.
\newblock Oxford: Oxford University Press.

\bibitem[{Ferkingstad \& Rue(2015)}]{ferkingstad2015improving}
\textsc{Ferkingstad, E.} \& \textsc{Rue, H.} (2015).
\newblock Improving the inla approach for approximate bayesian inference for
  latent gaussian models.
\newblock \textit{Electronic Journal of Statistics} \textbf{9}, 2706--2731.

\bibitem[{Fonseca \emph{et~al.}(2008)Fonseca, Ferreira \&
  Migon}]{fonseca2008objective}
\textsc{Fonseca, T.~C.}, \textsc{Ferreira, M. A.~R.} \& \textsc{Migon, H.~S.}
  (2008).
\newblock Objective {Bayesian} analysis for the {Student}-$t$ regression model.
\newblock \textit{Biometrika} \textbf{95}, 325--333.

\bibitem[{Fournier \emph{et~al.}(2012)Fournier, Skaug, Ancheta, Ianelli,
  Magnusson, Maunder, Nielsen \& Sibert}]{david2012ad}
\textsc{Fournier, D.~A.}, \textsc{Skaug, H.~J.}, \textsc{Ancheta, J.},
  \textsc{Ianelli, J.}, \textsc{Magnusson, A.}, \textsc{Maunder, M.~N.},
  \textsc{Nielsen, A.} \& \textsc{Sibert, J.} (2012).
\newblock {AD Model Builder: using automatic differentiation for statistical
  inference of highly parameterized complex nonlinear models}.
\newblock \textit{Optimization Methods and Software} \textbf{27}, 233--249.

\bibitem[{Gelman(2006)}]{gelman2006prior}
\textsc{Gelman, A.} (2006).
\newblock Prior distributions for variance parameters in hierarchical models.
\newblock \textit{Bayesian analysis} \textbf{1}, 515--534.

\bibitem[{Hsiao \emph{et~al.}(2004)Hsiao, Huang \& Chang}]{hsiao2004bayesian}
\textsc{Hsiao, C.~K.}, \textsc{Huang, S.-Y.} \& \textsc{Chang, C.-W.} (2004).
\newblock Bayesian marginal inference via candidate's formula.
\newblock \textit{Statistics and computing} \textbf{14}, 59--66.

\bibitem[{Jones(2002)}]{jones2002marginal}
\textsc{Jones, M.~C.} (2002).
\newblock Marginal replacement in multivariate densities, with application to
  skewing spherically symmetric distributions.
\newblock \textit{Journal of Multivariate Analysis} \textbf{81}, 85--99.

\bibitem[{Karim \& Zeger(1992)}]{karim1992}
\textsc{Karim, M.~R.} \& \textsc{Zeger, S.~L.} (1992).
\newblock Generalized linear models with random effects; salamander mating
  revisited.
\newblock \textit{Biometrics} \textbf{48}, 631.

\bibitem[{Kass \& Raftery(1995)}]{kass1995bayes}
\textsc{Kass, R.~E.} \& \textsc{Raftery, A.~E.} (1995).
\newblock Bayes factors.
\newblock \textit{Journal of the American Statistical Association} \textbf{90},
  773--795.

\bibitem[{Kass \emph{et~al.}(1990)Kass, Tierney \& Kadane}]{kass1990validity}
\textsc{Kass, R.~E.}, \textsc{Tierney, L.} \& \textsc{Kadane, J.~B.} (1990).
\newblock {The validity of posterior expansions based on Laplace's method}.
\newblock In \textit{Bayesian and Likelihood Methods in Statistics and
  Econometrics: Essays in Honor of George A. Barnard}, S.~Geisser, J.~Hodges,
  S.~Press \& A.~Zellner, eds. North Holland.

\bibitem[{Kharroubi \& Sweeting(2016)}]{kharroubi2016exponential}
\textsc{Kharroubi, S.~A.} \& \textsc{Sweeting, T.~J.} (2016).
\newblock {Exponential tilting in Bayesian asymptotics}.
\newblock \textit{Biometrika} \textbf{103}, 337--349.

\bibitem[{Lindley(1980)}]{lindley1980}
\textsc{Lindley, D.~V.} (1980).
\newblock {Approximate Bayesian methods}.
\newblock \textit{Trabajos de Estadistica Y de Investigacion Operativa}
  \textbf{31}, 223--245.

\bibitem[{Martino \emph{et~al.}(2011)Martino, Akerkar \& Rue}]{martino2011}
\textsc{Martino, S.}, \textsc{Akerkar, R.} \& \textsc{Rue, H.} (2011).
\newblock {Approximate Bayesian inference for survival models}.
\newblock \textit{Scandinavian Journal of Statistics} \textbf{38}, 514--528.

\bibitem[{McCullagh \& Nelder(1989)}]{mccullagh1989glm}
\textsc{McCullagh, P.} \& \textsc{Nelder, J.~A.} (1989).
\newblock \textit{Generalized Linear Models}.
\newblock London: Chapman and Hall, 2nd ed.

\bibitem[{Nott \emph{et~al.}(2009)Nott, Fielding \&
  Leonte}]{nott2009generalization}
\textsc{Nott, D.~J.}, \textsc{Fielding, M.} \& \textsc{Leonte, D.} (2009).
\newblock {On a generalization of the Laplace approximation}.
\newblock \textit{Statistics \& Probability Letters} \textbf{79}, 1397--1403.

\bibitem[{Ormerod \& Wand(2010)}]{ormerod2010explaining}
\textsc{Ormerod, J.~T.} \& \textsc{Wand, M.~P.} (2010).
\newblock Explaining variational approximations.
\newblock \textit{The American Statistician} \textbf{64}, 140--153.

\bibitem[{Pace \emph{et~al.}(2006)Pace, Salvan \& Ventura}]{Pace20063539}
\textsc{Pace, L.}, \textsc{Salvan, A.} \& \textsc{Ventura, L.} (2006).
\newblock Likelihood-based discrimination between separate scale and regression
  models.
\newblock \textit{Journal of Statistical Planning and Inference} \textbf{136},
  3539 -- 3553.

\bibitem[{Plummer(2013)}]{plummer}
\textsc{Plummer, M.} (2013).
\newblock {JAGS version 3.4.0}.
\newblock \url{http://mcmc-jags.sourceforge.net}.

\bibitem[{{R Core Team}(2016)}]{R-cran}
\textsc{{R Core Team}} (2016).
\newblock \textit{R: A Language and Environment for Statistical Computing}.
\newblock R Foundation for Statistical Computing, Vienna, Austria.

\bibitem[{Raudenbush \emph{et~al.}(2000)Raudenbush, Yang \&
  Yosef}]{raudenbush2000maximum}
\textsc{Raudenbush, S.~W.}, \textsc{Yang, M.-L.} \& \textsc{Yosef, M.} (2000).
\newblock {Maximum likelihood for generalized linear models with nested random
  effects via high-order, multivariate Laplace approximation}.
\newblock \textit{Journal of Computational and Graphical Statistics}
  \textbf{9}, 141--157.

\bibitem[{Rizopoulos \emph{et~al.}(2009)Rizopoulos, Verbeke \&
  Lesaffre}]{rizopoulos2009fully}
\textsc{Rizopoulos, D.}, \textsc{Verbeke, G.} \& \textsc{Lesaffre, E.} (2009).
\newblock {Fully exponential Laplace approximations for the joint modelling of
  survival and longitudinal data}.
\newblock \textit{Journal of the Royal Statistical Society: Series B}
  \textbf{71}, 637--654.

\bibitem[{Rue \emph{et~al.}(2009)Rue, Martino \& Chopin}]{rue2009approximate}
\textsc{Rue, H.}, \textsc{Martino, S.} \& \textsc{Chopin, N.} (2009).
\newblock {Approximate Bayesian inference for latent Gaussian models by using
  integrated nested Laplace approximations}.
\newblock \textit{Journal of the Royal Statistical Society: Series B}
  \textbf{71}, 319--392.

\bibitem[{Rue \emph{et~al.}(2004)Rue, Steinsland \&
  Erland}]{rue2004approximating}
\textsc{Rue, H.}, \textsc{Steinsland, I.} \& \textsc{Erland, S.} (2004).
\newblock {Approximating hidden Gaussian Markov random fields}.
\newblock \textit{Journal of the Royal Statistical Society: Series B}
  \textbf{66}, 877--892.

\bibitem[{Ruli \emph{et~al.}(2014)Ruli, Sartori \& Ventura}]{ruli2014marginal}
\textsc{Ruli, E.}, \textsc{Sartori, N.} \& \textsc{Ventura, L.} (2014).
\newblock Marginal posterior simulation via higher-order tail area
  approximations.
\newblock \textit{Bayesian Analysis} \textbf{9}, 129--146.

\bibitem[{Ruli \emph{et~al.}(2016)Ruli, Sartori \& Ventura}]{ilaplace}
\textsc{Ruli, E.}, \textsc{Sartori, N.} \& \textsc{Ventura, L.} (2016).
\newblock \textit{{iLaplace}: Improved {Laplace} Approximation for Integrals of
  Unimodal Functions}.
\newblock R package version 1.1.0.

\bibitem[{Shun(1997)}]{shun1997another}
\textsc{Shun, Z.} (1997).
\newblock {Another look at the salamander mating data: A modified Laplace
  approximation approach}.
\newblock \textit{Journal of the American Statistical Association} \textbf{92},
  341--349.

\bibitem[{Shun \& McCullagh(1995)}]{shun1995laplace}
\textsc{Shun, Z.} \& \textsc{McCullagh, P.} (1995).
\newblock Laplace approximation of high dimensional integrals.
\newblock \textit{Journal of the Royal Statistical Society: Series B}
  \textbf{57}, 749--760.

\bibitem[{Small(2010)}]{small}
\textsc{Small, C.~G.} (2010).
\newblock \textit{Expansions and Asymptotics for Statistics}.
\newblock Boca Ranton, Florida: Chapman \& Hall/CRC.

\bibitem[{Sung \& Geyer(2007)}]{sung2007}
\textsc{Sung, Y.~J.} \& \textsc{Geyer, C.~J.} (2007).
\newblock {Monte Carlo likelihood inference for missing data models}.
\newblock \textit{The Annals of Statistics} \textbf{35}, 990--1011.

\bibitem[{Tierney \& Kadane(1986)}]{tk1986}
\textsc{Tierney, L.} \& \textsc{Kadane, J.~B.} (1986).
\newblock Accurate approximations for posterior moments and marginal densities.
\newblock \textit{Journal of the American Statistical Association} \textbf{81},
  82--86.

\end{thebibliography}
%
\end{document}